\newcommand{\xv}{\mathbf{x}}
\newcommand{\yv}{\mathbf{y}}
\newcommand{\Dc}{\mathcal{D}}
\newcommand{\Gc}{\mathcal{G}}
\newcommand{\Cc}{\mathcal{C}}
\newcommand{\Rc}{\mathcal{R}}
\newcommand{\Sc}{\mathcal{S}}
\newcommand{\Tc}{\mathcal{T}}
\newcommand{\Vc}{\mathcal{V}}
\newcommand{\FF}{\mathbb{F}}
\newtheorem{example}{Example}
\newtheorem{lemma}{Lemma}
\newtheorem{prop}{Proposition}
\newtheorem{definition}{Definition}
\newenvironment{proof}{
\noindent
{\bf Proof.}
}{
\begin{flushright}$\blacksquare$\end{flushright}
}
\begin{document}
\title{An Authentication Code against Pollution Attacks \\in Network Coding}
\author{
Fr\'ed\'erique Oggier
and Hanane Fathi
\thanks{
F. Oggier is with Division of Mathematical Sciences, School of Physical
and Mathematical Sciences, Nanyang Technological University, Singapore.
Email:frederique@ntu.edu.sg. H. Fathi is with Center for
TeleInfrastuktur, Aalborg University, Denmark. Email:
hf@es.aau.dk. Part of this work was presented in an invited paper
at Allerton conference 2008. }} \maketitle

\begin{abstract}
Systems exploiting network coding to increase their throughput suffer
greatly from pollution attacks which consist of injecting malicious packets
in the network. The pollution attacks are amplified by the network coding
process, resulting in a greater damage than under traditional routing. In
this paper, we address this issue by designing an unconditionally secure
authentication code suitable for multicast network coding. The proposed
scheme is robust against pollution attacks from outsiders, as well as
coalitions of malicious insiders. Intermediate nodes can verify the integrity
and origin of the packets received without having to decode, and thus detect
and discard the malicious messages in-transit that fail the verification.
This way, the pollution is canceled out before reaching the destinations.
We analyze the performance of the scheme in terms of both multicast
throughput and goodput, and show the goodput gains. We also discuss
applications to file distribution.
\end{abstract}

%
%

\section{Introduction}

Network coding was first introduced in \cite{alsch} as an innovative approach
to characterize the rate region of multicast networks. Network
coding allows intermediate nodes between the source(s) and the destinations not
only to store and forward, but also to encode the received packets before
forwarding them. In \cite{li99}, Li et.~al showed that linear coding suffices 
to achieve the max-flow from the source to each receiving node in multicast 
networks, where intermediate nodes generate outgoing packets as linear 
combinations of their incoming packets.
In line with \cite{li99}, \cite{koetter01} gave an algebraic framework for
linear network coding with further developments for arbitrary networks
and robust networking. For practical issues, \cite{chou07} proposed a network
coding framework that allows to deal with random packet loss, change of
topology and delays.

Network coding offers various advantages not only for maximizing
the usage of network resources but also for robustness to network
impairments and packet losses. Various applications of network
coding have therefore appeared ranging from file download and
content distribution in peer-to-peer networks
\cite{Avalanche,GR05,GMR05} to distributed file storage systems
\cite{C08,DGWR07}.

While much of the literature on network coding discusses network capacity
or throughput, it is also natural to wonder about the impact of network
coding on network security. Pollution attacks, which consist of injecting
malicious packets in the network, are for example more dangerous for the
systems exploiting network coding than for those using traditional routing.
Indeed, in this scenario, malicious packets may come from the modification
of received packets by a malicious intermediate node, or from the creation
of bogus packets then injected in the network by an outside adversary. 
With no integrity check performed for packets in transit in the network, 
an honest intermediate node receiving a single malicious packet would perform 
the encoding of the malicious packet with other packets resulting in multiple
corrupted outgoing packets that are then forwarded on to the next
nodes. The corrupted packets propagate then all through the
network which creates severe damages, amplified by the network coding process.

\subsection{Authentication techniques}

One way to address the pollution attack problem is through authentication
techniques. Packets in transit at the intermediate nodes should be 
authenticated before being encoded and forwarded, to verify both their origin 
and their content.
The goal is to achieve authentication even in presence of both inside and 
outside attackers who can observe the messages flowing through the network 
and inject selected messages. The success of their attacks depends on their 
ability in sending a message that will be accepted as valid (i.e., 
{\em impersonation attack}) or in observing a message and then altering 
the message content (i.e. {\em substitution attack}) in such a way that 
intermediate nodes and destinations cannot detect it.

Let us recall that authentication consists of the following properties, 
though we will focus here only on the first two:
\begin{itemize}
\item
{\em data integrity}: protecting the data from any modification by malicious
entities,
\item
{\em data origin authentication}: validating the identity of the origin of
the data,
\item
{\em non-repudiation}: guaranteeing that the origin of the data cannot
deny having created and sent data.
\end{itemize}

To satisfy these properties, messages at the source are appended 
either a digital signature, a message authentication code (MAC) or
an authentication code (also called tag). There exist subtle
differences among these techniques. First, MAC and authentication
codes ensure data integrity and data origin authentication while
digital signatures provide also non-repudiation. Second, MACs,
authentication codes, and digital signatures should be
differentiated depending on what type of security they achieve:
computational security (i.e., vulnerable against an attacker that
has unlimited computational resources) or unconditional security
(i.e., robust against an attacker that has unlimited computational
resources). MACs are proven to be computationally secure while the
security of authentication codes is unconditional \cite{stinson}.
Digital signature schemes exist for both computational security
and unconditional security. However while computationally secure
digital signatures can be verified by anyone with a public
verification algorithm, the unconditionally secure digital
signatures can only be verified by intended receivers as it is for
MACs and authentication codes \cite{Hanaoka00}.


\subsection{Related work}
\label{subsec:relwork}

Several authentication schemes have been recently proposed in the
literature to detect polluted packets at intermediate nodes
\cite{GR06,charlesjain,Yu, ZhaoMed07,BFKW08}. All of them are
based on cryptographic functions with computational assumptions,
as detailed below.

The scheme in \cite{GR06} for network-coded content distribution allows
intermediate nodes to detect malicious packets injected in the network and to
alert neighboring nodes when a malicious packet is detected. It uses a
homomorphic hash function to generate hash values of the encoded blocks of
data that are then sent to the intermediate nodes and destinations prior to
the encoded data. The transmission of these hash values is performed over a
pre-established secure channel which makes the scheme impractical. The use of
hash functions makes the scheme fall into the category of computationally
secure schemes.

The signature scheme in \cite{charlesjain} is a homomorphic signature scheme
based on Weil pairing over elliptic curves, while the one proposed in
\cite{Yu} is a homomorphic signature scheme based on RSA. For both schemes,
intermediate nodes can authenticate the packets in transit without decoding,
and generate a verifiable signature of the packet that they have just encoded
without knowing the signer's secret key. However, these schemes require one
key pair for each file to be verified, which is not practical either.

The signature scheme proposed in \cite{ZhaoMed07} uses a standard signature
scheme based on the hardness of the discrete logarithm problem. The blocks of
data are considered as vectors spanning a subspace. The signature is not
performed on vectors containing data blocks, but on vectors orthogonal to all
data vectors in the given subspace. The signature verification allows to check
if the received vector belongs to the data subspace. The security of their
scheme holds in that no adversary knowing a signature on a given subspace of
data vectors is able to forge a valid signature for any vector not in this
given subspace. This scheme requires also fresh keys for every file.
 
Finally, the signature schemes
given in \cite{BFKW08} follow the approach given in \cite{ZhaoMed07} with
improvements in terms of public key size and per-packet overhead. The signature
schemes proposed are designed to authenticate a linear subspace formed by the
vectors containing data blocks. Signatures on a linear subspace are sufficient
to authenticate all the vectors in this same subspace. With these schemes, a
single public key can be used to verify multiple files.


\subsection{Organization and contribution}

In this paper, we propose an unconditionally secure solution that
provides multicast network coding with robustness against pollution attacks.
Our solution allows intermediate nodes and destinations to verify the data 
origin and integrity of the messages received without decoding, and 
thus to detect and discard the malicious messages that fail the verification.
It is important to note that destinations must receive a sufficient number
of uncorrupted messages to decode and recover the entire file sent by the
source. However, our solution provides the destinations with the ability
to filter out corrupted messages and to have them filtered out by
intermediate nodes as well.

Our scheme here aims for unconditional security. We rely on
information theoretic strength rather than on problems that are thought to be
hard as in \cite{GR06,Yu, ZhaoMed07,BFKW08}. Unconditional authentication 
codes have led to the development of multi-receiver authentication codes 
\cite{desmedt, safavi98} that are highly relevant in the context of network 
coding. Multi-receiver authentication codes allow any one of the receivers (in
the context of network coding, that may be intermediate nodes and destinations) 
to verify the integrity and origin of a received message but require the 
source to be designated. Our scheme is inspired from the $(k,V)$ 
multi-receiver authentication code proposed in \cite{safavi98} that is robust 
against a coalition of $k-1$ malicious receivers amongst $V$ and in which 
every key can be used to authenticate up to $M$ messages. We define and adapt 
the use of $(k,V)$ multi-receiver authentication codes to network coding so 
that intermediate nodes can detect malicious packets without having to decode.

Our scheme is adaptive to the specifications of the application in use and the 
network setting. Its efficiency is scenario-dependent. The communication and 
computational costs are function of parameters related to the application in 
use (i.e., the number $M$ of messages to be authenticated under the same key 
and the length $l$ of the messages) and to the network setting (i.e., the 
number of colluded malicious adversaries $k-1$ to be considered). However for 
the communication cost, one independent advantage exists over the previous 
schemes. Our scheme is particularly efficient in terms of communication 
overhead, since contrarily to all existing schemes 
\cite{GR06,Yu, ZhaoMed07,BFKW08}, it requires one single symbol only for 
tracking purposes.

We give a multicast goodput analysis to assess the impact of pollution
attacks on multicast throughput and to show how much goodput gain our
scheme offers. We show how our scheme can be used for applications such as
content and file distribution.

The rest of the paper is organized as follows. In Section \ref{sec:netcod}, we
briefly present the network coding model we consider and define what are
authentication codes in general and in particular for network coding. Section
\ref{sec:sign} presents the authentication scheme, whose analysis is presented
both in Section \ref{sec:analysis} for security, and in Section \ref{sec:perf}
for performance. Section \ref{sec:applic} shows how our scheme can be used
for content and file distribution. Future work is addressed in the conclusion.

%
%

\section{A Network Coding Setting for Authentication Codes}
\label{sec:netcod}

We start by introducing the multicast network coding model we are 
considering. Since we are not aware of prior work on authentication codes 
for network coding, we then propose a definition of authentication codes for 
multicast network coding.

\subsection{The multicast network coding model}

The model of network we consider is an acyclic graph having unit capacity
edges, with a single source $S$, which wants to send a set of $n$ messages to
$T$ destinations $D_1,\ldots,D_T$. Messages are seen as sequences of elements
of a finite field with $q$ elements, denoted by $\FF_q$. Each edge $e$ of
the graph carries a symbol $y(e) \in \FF_q$ at a time. For a node of the graph, 
the symbols on its outgoing edges are linear combinations, called {\em local
encoding}, of the symbols entering the node through its incoming edges.
If $x_1,\ldots,x_n$ are the symbols to be sent by the source $S$ at a time, 
we have by induction that on any edge $e$, $y(e)$ is actually a linear 
combination of the source symbols, that is $y(e)=\sum_{i=1}^ng_i(e)x_i$, 
where the coefficients $g_i(e)$ describe the coding operation. The vector 
$g(e)=[g_1(e),\ldots,g_n(e)]$ is thus called the {\em global encoding vector} 
along the edge $e$. We can describe the messages received by a node in the 
network with $h$ incoming edges $e_1,\ldots,e_h$ by the following matrix 
equation:
\begin{eqnarray*}
\left(
\begin{array}{c}
y(e_1)\\
\vdots\\
y(e_h)
\end{array}
\right)
&=&
\underbrace{
\left(
\begin{array}{ccc}
g_1(e_1) & \ldots & g_n(e_1) \\
\vdots & & \vdots \\
g_1(e_h) &\ldots & g_n(e_h) \\
\end{array}
\right)}_G
\left(
\begin{array}{c}
x_1\\
\vdots\\
x_n
\end{array}
\right)
\in \FF_q^n
\end{eqnarray*}
where $G$ is called a {\em transfer matrix}. In particular, the destination
nodes $D_i$, $i=1,\ldots,T$, can recover the source symbols $x_1,\ldots,x_n$,
assuming that their respective transfer matrix $G_{D_i}$ has rank $n$,
$i=1,\ldots,T$ (this also means $h\geq n$). In this paper, we are not concerned
about the existence of global encoding vectors, and we thus assume that we deal
with a network for which suitable linear encoding vectors exist, so that destination
nodes are able to decode the received packets correctly.

We can packetize the symbols $y(e)$ flowing on each edge $e$ into vectors
$\yv(e)=[y_1(e),\ldots,y_N(e)] \in \FF_q^N$, and likewise, the source symbols
$x_i$ can be grouped as $\xv_i=[x_{i,1},\ldots,x_{i,N}] \in \FF_q^N$, so that
the equation at a node with $h$ incoming edges can be rewritten as
\begin{equation}\label{eq:v_i}
\left(
\begin{array}{c}
\yv(e_1)\\
\vdots\\
\yv(e_h)
\end{array}
\right)
=
\left(
\begin{array}{ccc}
g_1(e_1) & \ldots & g_n(e_1) \\
\vdots & & \vdots \\
g_1(e_h) & \ldots & g_n(e_h) \\
\end{array}
\right)
\left(
\begin{array}{c}
\xv_1\\
\vdots\\
\xv_n
\end{array}
\right)\in\FF_q^{h\times N}
\end{equation}
or equivalently
\[
\left(
\begin{array}{c}
\yv(e_1)\\
\vdots\\
\yv(e_h)
\end{array}
\right)
=
G
\left(
\begin{array}{cccc}
x_{1,1} & x_{1,2} & \ldots & x_{1,N} \\
\vdots & \vdots & & \vdots\\
x_{n,1} & x_{n,2} & \ldots & x_{n,N}
\end{array}
\right)\in\FF_q^{h\times N}
\]
where $\xv_1,\ldots,\xv_n$ are the $n$ messages of length $N$ to be sent by
the source.

\begin{example}\label{ex:net1}\rm
Consider the small network (taken from \cite{koetter01}) as shown in
Fig.~\ref{fig:ex1}, where the source $S$ wants to send $n=3$ messages
$\xv_1,\xv_2,\xv_3 \in \FF_2^N$ to $T=1$ destination $D_1$, through two nodes
$R_1$ and $R_2$.

\begin{figure}
\begin{center}
\leavevmode
\epsfxsize=8cm
\epsfbox{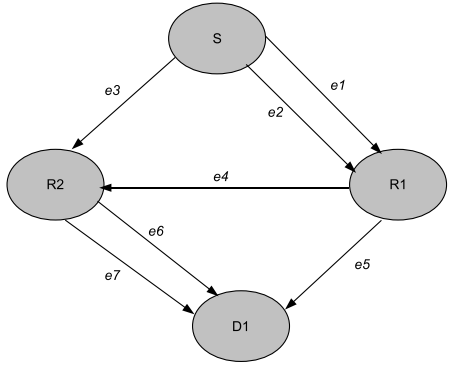}
\caption{A small example of network with one source $S$, one destination 
$D_1$ and two relay nodes $R_1$ and $R_2$. Global encoding vectors have  
coefficients in $\FF_2$.}
\label{fig:ex1}
\end{center}
\end{figure}

The source computes the vector
\[
\left(
\begin{array}{c}
\yv(e_1)\\
\yv(e_2)\\
\yv(e_3)\\
\end{array}
\right)=
\left(
\begin{array}{ccc}
g_1(e_1)&g_2(e_1)&g_3(e_1)\\
g_1(e_2)&g_2(e_2)&g_3(e_2)\\
g_1(e_3)&g_2(e_3)&g_3(e_3)\\
\end{array}
\right)
\left(
\begin{array}{c}
\xv_1 \\
\xv_2 \\
\xv_3
\end{array}
\right)
\]
as a linear combination of its three messages $\xv_1,\xv_2,\xv_3$ and
sends each $\yv(e_i)$  over the edge $e_i$, $i=1,2,3$.
The node $R_1$ receives $\yv(e_1)$ and $\yv(e_2)$, which it encodes
as follows using its global encoding vectors 
$g(e_4)=(\alpha_{11},\alpha_{12})$ and $g(e_5)=(\alpha_{21},\alpha_{22})$:
\begin{eqnarray*}
\left(
\begin{array}{c}
\yv(e_4)\\
\yv(e_5)
\end{array}
\right)
&=&
\left(
\begin{array}{cc}
\alpha_{11} & \alpha_{12} \\
\alpha_{21} & \alpha_{22} \\
\end{array}
\right)
\left(
\begin{array}{c}
\yv(e_1)\\
\yv(e_2)\\
\end{array}
\right)\\
&=&
\left(
\begin{array}{cc}
\alpha_{11} & \alpha_{12} \\
\alpha_{21} & \alpha_{22} \\
\end{array}
\right)
\left(
\begin{array}{ccc}
g_1(e_1)&g_2(e_1)&g_3(e_1)\\
g_1(e_2)&g_2(e_2)&g_3(e_2)\\
\end{array}
\right)
\left(
\begin{array}{c}
\xv_1 \\
\xv_2 \\
\xv_3
\end{array}
\right)\\
&=:&
\left(
\begin{array}{ccc}
g_1(e_4)&g_2(e_4)&g_3(e_4)\\
g_1(e_5)&g_2(e_5)&g_3(e_5)\\
\end{array}
\right)
\left(
\begin{array}{c}
\xv_1\\
\xv_2\\
\xv_3
\end{array}
\right)
\end{eqnarray*}
while the node $R_2$ gets
\begin{eqnarray*}
\left(
\begin{array}{c}
\yv(e_3)\\
\yv(e_4)
\end{array}
\right)
&=&
\left(
\begin{array}{ccc}
0 & 0 & 1 \\
\alpha_{11} & \alpha_{12} & 0 \\
\end{array}
\right)
\left(
\begin{array}{c}
\yv(e_1)\\
\yv(e_2)\\
\yv(e_3)
\end{array}
\right)
\\
&=&
\left(
\begin{array}{ccc}
0 & 0 & 1 \\
\alpha_{11} & \alpha_{12} & 0 \\
\end{array}
\right)
\left(
\begin{array}{ccc}
g_1(e_1) & g_2(e_1) & g_3(e_1) \\
g_1(e_2) & g_2(e_2) & g_3(e_2) \\
g_1(e_3) & g_2(e_3) & g_3(e_3)\\
\end{array}
\right)
\left(
\begin{array}{c}
\xv_1 \\
\xv_2 \\
\xv_3
\end{array}
\right)
\\
&=:&
\left(
\begin{array}{ccc}
g_1(e_3)&g_2(e_3) & g_3(e_3)\\
g_1(e_4)&g_2(e_4) & g_3(e_4)
\end{array}
\right)
\left(
\begin{array}{c}
\xv_1 \\
\xv_2 \\
\xv_3
\end{array}
\right).
\end{eqnarray*}
Denote by $g(e_6)=(\beta_{11},\beta_{12})$ and $g(e_7)=(\beta_{21},\beta_{22})$ 
the global encoding vectors of node $R_2$ corresponding to the edges 
$e_6$ and $e_7$ respectively. Finally, the destination gets
\begin{eqnarray*}
&\!\!\!\!\!&\left(
\begin{array}{c}
\yv(e_5)\\
\yv(e_6)\\
\yv(e_7)
\end{array}
\right)\\
&\!\!\!\!\!=& \!\!\!\!\!
\left(
\begin{array}{cccc}
\alpha_{21} & \alpha_{22} & 0 & 0 \\
0 & 0 & \beta_{11} & \beta_{12} \\
0 & 0 & \beta_{21} & \beta_{22}
\end{array}
\right)
\left(
\begin{array}{c}
\yv(e_1)\\
\yv(e_2)\\
\yv(e_3)\\
\yv(e_4)
\end{array}
\right)
\\
&\!\!\!\!\!=& \!\!\!\!\!
\left(
\begin{array}{cccc}
\alpha_{21} & \alpha_{22} & 0 & 0 \\
0 & 0 & \beta_{11} & \beta_{12} \\
0 & 0 & \beta_{21} & \beta_{22}
\end{array}
\right)\!\!\!
\left(
\begin{array}{ccc}
g_1(e_1) & g_2(e_1) & g_3(e_1) \\
g_1(e_2) & g_2(e_2) & g_3(e_2) \\
g_1(e_3) & g_2(e_3) & g_3(e_3)\\
\alpha_{11}g_1(e_1)+\alpha_{12}g_1(e_2) & \alpha_{11}g_2(e_1)+\alpha_{12}g_2(e_2)
& \alpha_{11}g_3(e_1)+\alpha_{12}g_3(e_2)\\
\end{array}
\right)\!\!\!
\left(
\begin{array}{c}
\xv_1 \\
\xv_2 \\
\xv_3
\end{array}
\right)
\\
&\!\!\!\!\!=:& \!\!\!\!\!
\underbrace{
\left(
\begin{array}{cccc}
g_1(e_5) & g_2(e_5) & g_3(e_5) \\
g_1(e_6) & g_2(e_6) & g_3(e_6)\\
g_1(e_7) & g_2(e_7) & g_3(e_7)\\
\end{array}
\right)}_{G_{D_1}}
\left(
\begin{array}{c}
\xv_1\\
\xv_2\\
\xv_3\\
\end{array}
\right).
\end{eqnarray*}
The destination $D_1$ can decode if the global vectors have been chosen
such that the transfer matrix $G$ is invertible. The global vectors are linear
combinations of the local encoding coefficients $\alpha_{ij}$ at $R_1$ and
$\beta_{ij}$ at $R_2$, $i,j=1,2$. There are many configurations
over $\FF_2$ such that $G$ is invertible. Take for example $g_1(e_1)= g_2(e_2)
=g_3(e_3)=1$, $g_1(e_2)=g_2(e_1)=g_3(e_1)=g_1(e_3)=g_2(e_3)=g_3(e_2)=0$, with
$\beta_{12}=\beta_{21}=\alpha_{12}=\alpha_{21}=1$ and $\beta_{11}=\beta_{22}=
\alpha_{11}=\alpha_{12}=0$ (this yields the transfer matrix
to be equal to the identity matrix).
\end{example}


\subsection{Authentication codes for network coding}

Since we are not aware of prior work on network coding authentication codes, 
let us start by recalling the setting for classical authentication schemes, 
as proposed by Desmedt et al. In \cite{desmedt}, the authors proposed a model 
for unconditionally secure authentication where one transmitter communicates
to multiple receivers who can not all be trusted. In this scenario, the
transmitter first appends a tag to a common message which is then broadcasted
to all the receivers, who can separately verify the authenticity of the tagged 
message using their own private secret key. There is among the receivers a 
group of malicious receivers, who use their secret key and all the previous 
messages to construct fake messages. A $(k,V)$ multi-receiver authentication 
system refers to a scheme where $V$ receivers are present, among which at 
most $k-1$ can cheat.
The malicious nodes can perform either an {\em impersonation attack}, if 
they try to construct a valid tagged message without having seen any 
transmitted message before, or a {\em substitution attack}, if they first 
listen to at least one tagged message before trying to fake a tag in such 
a way that the receiver will accept the tagged message. 
{\em Perfect protection} is obtained if the best chance of success in the 
attack is $1/|\Tc|$ where $|\Tc|$ is the size of tag space, namely, the 
attacker cannot do better than make a guess, and pick randomly one tag.

In \cite{safavi98}, the scheme of Desmedt et al. has been generalized to
the case where the same key can be used to authenticate up to $M$ messages.

The network coding scenario that we consider in this paper is a multicast
setting, where one source wants to send a set of messages to $T$ destinations.
In order to propose a definition of network coding authentication scheme, let
us first understand the main differences with respect to the classical 
multi-receiver scenario:
\begin{enumerate}
\item
The source does not broadcast the same message on all its outgoing edges,
but sends different linear combinations of the $n$ messages $\xv_1,\ldots,
\xv_n$, which means that the key used by the source to sign the
messages will be used more than once, actually at least as many times as there
are outgoing edges from the source.
\item
We are interested in a more general network scenario, where intermediate nodes 
play a role. In particular, it is relevant in the context of pollution attacks 
that not only destination nodes but also intermediate nodes may check the 
authenticity of the packets. We call such nodes in the network {\em verifying} 
nodes. This set may include part or all of the destination nodes 
$D_1,\ldots,D_T$. This makes a big difference in network coding, since while 
the destination nodes do have a transfer matrix to recover the message sent, 
this is not the case of regular intermediate nodes, which must perform the 
authentication check without being able a priori to decode.
\end{enumerate}

Based on the above considerations, we propose the following definition for 
multicast network coding.

\begin{definition}
We call a $(k,V,M)$ {\bf\em network coding authentication code} an 
authentication code for $V$ verifying nodes, which is unconditionally secure 
against either substitution or impersonation attacks done by a group of at most 
$k-1$ adversaries, possibly belonging to the verifying nodes, where the 
source can use the same key at most $M$ times.
\end{definition}

%
%

\section{The Authentication Scheme}
\label{sec:sign}

Recall that we have a single source $S$, which wants to multicast $n$ 
messages to $T$ destinations $D_1,\ldots,D_T$. We will denote the set of 
messages by  $s_1,\ldots,s_n$ to refer to the actual data to be sent, while 
we keep the notation $\xv_1,\ldots,\xv_n \in \FF_q^N$ for the whole packets, 
including the authentication tag. Each message $s_i$ is of length $l$, 
$s_i=(s_{i,1},\ldots,s_{i,l})$, so that while each symbol $s_{i,j}$ belongs 
to $\FF_q$, we can see the whole message as part of $\FF_q^l\simeq\FF_{q^l}$.
We also assume a set of nodes $R_1,\ldots,R_V$ which can verify the
authentication. A priori, this set can include the destinations, but can
also be larger. Typically we will assume that $V >>T$, in the context
of pollution attacks.

We now present our $(k,V,M)$ network coding authentication scheme and discuss 
its efficiency. Security will be analyzed in the next section.


\subsection{Set-up and authentication tag generation}

We propose the following authentication scheme:
\begin{enumerate}
\item
{\bf Key generation}: A trusted authority randomly generates $M+1$
polynomials $P_0(x),\ldots,P_M(x) \in \FF_{q^l}[x]$ and chooses $V$ distinct
values $x_1,\ldots,x_V \in \FF_{q^l}$. These polynomials are of degree $k-1$,
and we denote them by
\[
P_i(x)=a_{i0}+a_{i1}x+a_{i2}x^2+\ldots+a_{i,k-1}x^{k-1},
\]
$i=0,\ldots,M$.
\item {\bf Key distribution}: The trusted
authority gives as private key to the source $S$ the $M+1$
polynomials $(P_0(x),\ldots,P_M(x))$, and as private key for each
verifier $R_i$ the $M+1$ polynomials evaluated at $x=x_i$, namely
$(P_0(x_i),\ldots,P_M(x_i))$, $i=1,\ldots,V$. The values
$x_1,...,x_V$ are made public.
The keys can be given to the nodes at the same time as they are
given their local encoding vectors.

\item {\bf Authentication tag}: Let us assume that the source wants to
send $n$ data messages $s_1,\ldots,s_n\in\FF_q^l$. The source computes the 
following polynomial
\[
A_{s_i}(x)=P_0(x)+s_iP_1(x)+s_i^qP_2(x)\ldots+s_i^{q^{(M-1)}}P_M(x)\in\FF_{q^l}[x]
\]
which forms the authentication tag of each $s_i$, $i=1,\ldots,n$. 
The packets $\xv_i$ to be actually sent by the source are of the form
\[
\xv_i = [1,s_i,A_{s_i}(x)]\in\FF_q^{1+l+kl},~i=1,\ldots,n.
\]
The tag is attached after the message, and 1 bit is added at the beginning, 
which will be used to keep track of the network coding coefficients.
\end{enumerate}

The number $M+1$ of polynomials $P_i(x)$ is related to the number of usages
of the key, while the degree $k-1$ corresponds to the size of attackers
coalition.

Note that while making public the values $x_1,\ldots,x_V$ still may help
an attacker, we prefer to make them public and prove that actually this
does not help the attacker, in order to minimize the amount of secret
information given to the nodes.


\subsection{Verification and correctness of the authentication tag}

In order to discuss the authentication check, let us recall from
(\ref{eq:v_i}) what is the received tagged vector at a node $R_i$ with 
$i_h$ incoming edges when the source is sending 
$\xv_j=[1,s_j,A_{s_j}(x)]\in\FF_q^{1+l+kl}$, $j=1,\ldots,n$:
\begin{eqnarray*} \left(
\begin{array}{c}
\yv(e_{i_1})\\
\vdots\\
\yv(e_{i_h})
\end{array}
\right)
&=&
\left(
\begin{array}{ccc}
g_1(e_{i_1}) & \ldots & g_n(e_{i_1}) \\
\vdots & & \vdots \\
g_1(e_{i_h}) & \ldots & g_n(e_{i_h}) \\
\end{array}
\right)
\left(
\begin{array}{ccc}
1 & s_1 & A_{s_1}(x) \\
\vdots\\
1 & s_n & A_{s_n}(x)
\end{array}
\right)
\\
&=&
\left(\!\!
\begin{array}{ccc}
\sum_{j=1}^ng_j(e_{i_1})& \sum_{j=1}^ng_j(e_{i_1})s_j & \sum_{j=1}^ng_j(e_{i_1})A_{s_j}(x)\\
 & \vdots & \vdots \\
\sum_{j=1}^ng_j(e_{i_h}) & \sum_{j=1}^ng_j(e_{i_h})s_j & \sum_{j=1}^ng_j(e_{i_h})A_{s_j}(x)\\
\end{array}
\!\!\right).
\end{eqnarray*}
Recall that a verifying node $R_i$ further has a private key given by
\[
P_0(x_i),\ldots,P_M(x_i).
\]
For each incoming edge $e_k$, $k=i_1,\ldots,i_h$, the node $R_i$ can thus 
compute the product of the received data on the edge by the private keys, 
as follows:
\[
P_0(x_i)\sum_{j=1}^ng_j(e_k),~P_1(x_i)\sum_{j=1}^ng_j(e_k)s_j
\]
and similarly for the key $P_2(x_i)$
\[
P_2(x_i)\left(\sum_{j=1}^ng_j(e_k)s_j\right)^q=P_2(x_i)\sum_{j=1}^n(g_j(e_k)s_j)^q
=P_2(x_i)\sum_{j=1}^ng_j(e_k)s_j^q
\]
and the other keys $P_j(e_k)$, $j=3,\ldots,M$. For example:
\[ P_M(x_i)\left(\sum_{j=1}^ng_j(e_k)s_j\right)^{q^{(M-1)}}
=P_M(x_i)\sum_{j=1}^n(g_j(e_k)s_j)^{q^{(M-1)}}
=P_M(x_i)\sum_{j=1}^ng_j(e_k)s_j^{q^{M-1}}.
\]
On the other hand, it can evaluate the polynomial
\[
\sum_{j=1}^ng_j(e_k)A_{s_j}(x)
\]
in $x_i$, which is public. This yields
\begin{eqnarray*}
\sum_{j=1}^ng_j(e_k)A_{s_j}(x_i)
&=&\sum_{j=1}^ng_j(e_k)(P_0(x_i)+s_jP_1(x_i)+s_j^qP_2(x_i)+\ldots+s_j^{q^{(M-1)}}P_M(x_i))\\
&=&\sum_{j=1}^ng_j(e_k)P_0(x_i)+\sum_{j=1}^ng_j(e_k)s_jP_1(x_i)+\ldots+\sum_{j=1}^ng_j(e_k) s_j^{q^{(M-1)}}P_M(x_i).
\end{eqnarray*}
The node $R_i$ accepts a packet on its incoming edge $e_k$ if the two
computations coincide, which we have just shown they do if there is no 
alteration of the protocol. Note that the verifying node does not need to 
decode the message (which it may not be able to do) in order to perform 
the check.

\begin{example}\label{ex:net2}\rm
Consider the network of Example \ref{ex:net1} with a $(2,2,3)$ authentication 
scheme, where we have $V=2$ nodes which verify the authentication tags, say 
the relay node $R_1$ and the destination $D_1$, and the key can be used $3$ 
times, to protect against a coalition of at most $2$ attackers (either only 
$R_2$, or $R_2$ and $R_1$ if the latter gets corrupted though it has a private 
key). The source $S$ wants to send two messages 
$s_1,s_2 \in \FF_{2^3}\simeq\FF_2^3$, that is 
$s_1=(s_{1,1},s_{1,2},s_{1,3})\in\FF_2^3$ and 
$s_2=(s_{2,1},s_{2,2},s_{2,3})\in\FF_2^3$ with $s_{i,j}\in \FF_2=\{0,1\}$.
During the key generation and distribution, we have that:
\begin{itemize}
\item
The source is given the $M+1=3$ polynomials $P_0(x)=a_{00}+a_{01}x$,
$P_1(x)=a_{10}+a_{11}x$, and $P_2(x)=a_{20}+a_{21}x$, of degree $k-1=1$, 
with coefficients $a_{ij}$ in $\FF_{2^3}$.
\item
The values $x_1,x_2\in\FF_{2^3}$ are made public.
\item
The relay node $R_1$ receives the secret values $P_0(x_1)$, $P_1(x_1)$, 
$P_2(x_1)$ as its private key.
\item
The destination node $D_1$ receives the secret values $P_0(x_2)$, $P_1(x_2)$, 
$P_2(x_2)$ as its private key.
\end{itemize}

The source computes two authentication tags:
\begin{eqnarray*}
A_{s_1}(x)\!\!\!\!&=&\!\!\!\!
P_0(x)+s_1P_1(x)+s_1^2P_2(x)\\
&=&
(a_{00}+a_{10}s_1+a_{20}s_1^2)+x(a_{01}+a_{11}s_1+a_{21}s_1^2)\\
&=:&b_{10}+xb_{11},~b_{10},b_{11}\in\FF_8\\
A_{s_2}(x)\!\!\!\!&=&\!\!\!\!
P_0(x)+s_2P_1(x)+s_2^2P_2(x)\\
&=&
(a_{00}+a_{10}s_2+a_{20}s_2^2)+x(a_{01}+a_{11}s_2+a_{21}s_2^2)\\
&=:&b_{20}+xb_{21},~b_{20},b_{21}\in\FF_8.
\end{eqnarray*}
The two packets to be sent are
\begin{eqnarray*}
\xv_1 &=& [1,s_1,A_{s_1}(x)]=[1,s_{1,1},s_{1,2},s_{1,3},b_{10},b_{11}]
\in (\FF_2)^{10}\\
\xv_2 &=& [1,s_2,A_{s_2}(x)]=[1,s_{2,1},s_{2,2},s_{2,3},b_{20},b_{21}]
\in (\FF_2)^{10}.
\end{eqnarray*}
The first node $R_1$ has two input edges $e_1,e_2$, and its received vector is
given by
\begin{eqnarray*}
\left(
\begin{array}{c}
\yv(e_1)\\
\yv(e_2)
\end{array}
\right)&=&
\left(
\begin{array}{cc}
g_1(e_1)&g_2(e_1)\\
g_1(e_2)&g_2(e_2)
\end{array}
\right)
\left(
\begin{array}{ccc}
1& s_1 &A_{s_1}(x)\\
1& s_2 &A_{s_2}(x) \\
\end{array}
\right)\\
&=& \left(
\begin{array}{ccc}
g_1(e_1)+g_2(e_1)&g_1(e_1)s_1+g_2(e_1)s_2&g_1(e_1)A_{s_1}(x)+g_2(e_1)A_{s_2}(x)\\
g_1(e_2)+g_2(e_2)&g_1(e_2)s_1+g_2(e_2)s_2&g_1(e_2)A_{s_2}(x)+g_2(e_2)A_{s_2}(x) \\
\end{array}
\right).
\end{eqnarray*} 
The data which is public is $x_1,x_2$. Using $x_1$ and its private key 
$(P_0(x_1)$, $P_1(x_1)$, $P_2(x_1))$, $R_1$ can compute from $\yv(e_1)$ the 
following three terms:
\[
P_0(x_1)(g_1(e_1)+g_2(e_1)),~P_1(x_1)(g_1(e_1)s_1+g_2(e_1)s_2),
\]
and
\[
P_2(x_1)(g_1(e_1)s_1+g_2(e_1)s_2)^2=P_2(x_1)(g_1(e_1)^2s_1^2+g_2(e_1)^2s_2^2)
=P_2(x_1)(g_1(e_1)s_1^2+g_2(e_1)s_2^2)
\]
whose sum gives
\begin{equation}\label{eq:exsum}
P_0(x_1)(g_1(e_1)+g_2(e_1))+P_1(x_1)(g_1(e_1)s_1+g_2(e_1)s_2)+P_2(x_1)(g_1(e_1)s_1^2+g_2(e_1)s_2^2).
\end{equation}
Since $R_1$ has also received $g_1(e_1)A_{s_1}(x)+g_2(e_1)A_{s_2}(x)$, it can 
evaluate the polynomial in $x_1$ and check whether 
$g_1(e_1)A_{s_1}(x_1)+g_2(e_1)A_{s_2}(x_1)$ is equal to the sum (\ref{eq:exsum}).
If yes, the node $R_1$ accepts the authentication tag and re-encode the 
packet, otherwise, the packet is discarded.
A similar check is performed on $e_2$, and by the destination on its incoming 
edges using its own private key.
\end{example}


\subsection{Parameters and efficiency}

We discuss the efficiency of the proposed scheme, based on the communication, 
computation, and storage costs. The different parameters involved are 
summarized in Table \ref{tab:nbsymb}.

\begin{table}
\centering
\begin{tabular}{|c|c|c|c|}
\hline
parameters         &  notation          & symb/item & total in $\FF_q$\\
\hline
source private keys   & $P_i(x)$, $i=0,\ldots,M$    & $k$  & $k(M+1)l$ \\
public values & $x_i$, $i=1,\ldots,V$      & $1$  & $Vl$ \\
verifiers' private keys &$P_i(x_i)$, $i=0,\ldots,M$& $M+1$&$V(M+1)l$\\
tags               & $A_{s_i}(x)$, $i=1,\ldots,n$& $kl$ & $nkl$ \\
\hline
\end{tabular}
\normalsize 
\caption{Sizes for the keys and tags of the proposed $(k,V,M)$ scheme.}
\label{tab:nbsymb}
\end{table}

There are two classes of parameters, those fixed by the network, namely,
the number $T$ of destination nodes, the network code alphabet $\FF_q$, 
the length $l$ of the data packets, and $n$ the number of messages to be 
sent by the source. We then have the security parameters $k,V$ and $M$, 
which first depend on the network parameters: 
\begin{itemize}
\item {\bf Constraints on $V$}:
We will typically take $V >> T$, which means that more nodes than just
the destinations will check the authentication tags. We could imagine $V<T$ if
we do not even want all the destinations to check the authentication of their 
packets. However our goal is to have enough nodes in the network (though not
necessarily all of them) verifying the integrity of the packets to avoid
the propagation of polluted packets.
We further have $q^l \geq V$, since private verification keys are obtained by
evaluating the polynomials in $x_i$, $i=1,\ldots,V$. If $V \geq q^l$, then we
are forced to use some values of $\FF_q$ more than once, and the private
keys are not unique anymore. Thus $q^l \geq V >> T$.
\item {\bf Constraints on $M$}:
We assume that $M$ is at least greater than $n$, to be able to protect  with the
same key all the messages to be sent within one encoding round.
\end{itemize}

The scheme communication cost mainly relies on the size of the authentication 
tag $|A_{s_i}|$, $i=1,\ldots,n$, which is $O(kl)$, since the length of the tag 
is $kl$, and we also have to consider the augmentation of the data
vectors by one symbol element performed at the source.

The computational costs involve computing and appending the tag at the source, 
and verifying the tag at some intermediate nodes and at the destinations.
\begin{itemize}
\item{\bf Cost at the source}: For creating a tag based on a message 
$s_i\in\FF_{q^l}$, recall that the source computes the following polynomial: 
\[
A_{s_i}(x)=P_0(x)+s_iP_1(x)+s_i^qP_2(x)\ldots+s_i^{q^{(M-1)}}P_M(x)\in\FF_{q^l}[x],
\]
which involves thus $M-1$ exponentiations in $\FF_{q^l}$ to compute 
$s_i^{q^j}$, $j=1,\ldots,M-1$, and then $kM$ multiplications in $\FF_{q^l}$ 
to get $P_j(x)s_i^{q^{(j-1)}}$, $j=1,\ldots,M$. This is repeated for each of 
the $n$ messages $s_i$, $i=1,\ldots,n$.
\item{\bf Cost at the verifying nodes}: A verifying node $R_i$ needs to do two things to check the tag.
First, it computes
\[
P_0(x_i)\sum_{j=1}^ng_j(e_k),~P_1(x_i)\sum_{j=1}^ng_j(e_k)s_j,\ldots,
~P_M(x_i)\left(\sum_{j=1}^ng_j(e_k)s_j\right)^{q^{M-1}}
\]
which takes $M-1$ exponentiations in $\FF_{q^l}$ and $M+1$ multiplications 
in $\FF_{q^l}$, before evaluating the polynomial arrived on its incoming 
edge $e_k$
\[
\sum_{j=1}^ng_j(e_k)A_{s_j}(x) \in \FF_{q^l}[x] 
\]
in the public key $x_i\in\FF_{q^l}$. Since the polynomial is of degree $k-1$, 
its evaluation requires $k-2$ exponentiations in $\FF_q$ for $x_i^j$, 
$j=2,\ldots,k-1$, and $k-1$ multiplications in $\FF_{q^l}$ to multiply each 
$x_i^j$, $j=1,\ldots,k-1$, with the coefficients of the polynomial. This is 
done for each of the $i_h$ incoming edges.
\end{itemize}

Finally, the storage cost consists of the size of the keys, that is $M+1$ 
keys of size $k$ for the source, and the $M+1$ polynomials evaluated in one 
value of $\FF_{q^l}$, yielding $M+1$ values in $\FF_{q^l}$ for each of the verifying 
nodes.

All the costs of the proposed scheme are summarized in Table 
\ref{tab:Comparison}.

\begin{table}
\centering
\begin{tabular}{|c|c|c|c|c|c|c|c|}
\hline
Tag or signature size&$kl$\\
\hline
Communication cost &$kl+1$\\
\hline
Tag or signature &$n(M-1)l$ exp \\
computational cost &$nkMl$ mult \\
\hline
Verification & $((M-1)+k-2)lh$ exp \\
computational cost &$((M+1)+k-1)lh$ mult \\
\hline
Storage at the source& $(M+1)lk $\\
Storage at the verifiers& $(M+1)l$ \\
\hline
\end{tabular}
\normalsize
\caption{Efficiency of the proposed scheme. The parameter $h$ denotes the 
number of incoming edges of a verifying node. Operations (multiplications 
and exponentiations) as well as numbers of symbols are in $\FF_q$.}
\label{tab:Comparison}
\end{table}

%
%

\section{Security Analysis of the Authentication Scheme}
\label{sec:analysis}

Threats are coming from either outside or inside opponents, who can attempt 
either impersonation or substitution attacks. Outside opponents are assumed 
to be able to see the data on the incoming edges of some of the intermediate 
nodes. Inside opponents of course see the messages transiting through them, 
but the difference is that some of them may actually be verifying nodes, and 
thus they can use their own private keys to forge a substitution attack. 
The analysis focuses on the worst case scenario, namely a coalition of inside 
malicious nodes in possession of private keys is trying to make a substitution 
attack, that is, to send a fake packet after observing tagged messages in such 
a way that a node which checks for authentication will actually accept the 
faked authentication tag. 


\subsection{Preliminaries}

In the following, we may write as matrix indices the dimension of the 
matrices for clarity. 

Suppose that a malicious node has $i_h$ incoming edges, with
received vector
\begin{eqnarray*}
\left(
\begin{array}{c}
\yv(e_{i_1})\\
\vdots\\
\yv(e_{i_h})
\end{array}
\right)
&=&
\left(
\begin{array}{ccc}
g_1(e_{i_1}) & \ldots & g_n(e_{i_1}) \\
\vdots & & \vdots \\
g_1(e_{i_h}) & \ldots & g_n(e_{i_h}) \\
\end{array}
\right)
\left(
\begin{array}{ccc}
1 & s_1 & A_{s_1}(x) \\
\vdots&\vdots& \vdots\\
1 & s_n & A_{s_n}(x)
\end{array}
\right)
\\
&=&
\left(
\begin{array}{ccc}
\sum_{j=1}^ng_j(e_{i_1}) & \sum_{j=1}^ng_j(e_{i_1})s_j & \sum_{j=1}^ng_j(e_{i_1})A_{s_j}(x)\\
 \vdots& \vdots & \vdots \\
\sum_{j=1}^ng_j(e_{i_h}) & \sum_{j=1}^ng_j(e_{i_h})s_j & \sum_{j=1}^ng_j(e_{i_h})A_{s_j}(x)\\
\end{array}
\right),
\end{eqnarray*}
from which it tries to learn about the source private keys.
If we write
\begin{eqnarray*}
A_{s_j}(x)
&=& P_0(x)+s_jP_1(x)+\ldots+s_j^{q^{M-1}}P_M(x) \\
&=&b_{j0}+b_{j1}x+\ldots+b_{j,k-1}x^{k-1}\in\FF_{q^l}[x],
\end{eqnarray*}
we have that for all incoming edges $e_m$
\begin{eqnarray*}
\sum_{j=1}^ng_j(e_m)A_{s_j}(x)
&=&
\sum_{j=1}^ng_j(e_m)(b_{j0}+b_{j1}x+\ldots+b_{j,k-1}x^{k-1})\\
&=&
c_{m0}+c_{m1}x+\ldots+c_{m,k-1}x^{k-1},
\end{eqnarray*}
where
\[
c_{mi}=\sum_{j=1}^ng_j(e_m)b_{ji}\in\FF_{q^l}.
\]
Thus, the malicious node actually knows $c_{mi}$, $i=1,\ldots,k-1$, for every 
incoming edge $e_m$, $m=i_1,\ldots,i_h$, and upon reception of its incoming 
vector, it can obtain the following system of linear equations:
\begin{equation}\label{eq:systeq}
A_{k\times (M+1)}G_{(M+1)\times h}=C_{k\times h}.
\end{equation}
Both the matrix $G$ containing the network
coding coefficients and the matrix $C$ respectively given by
\[
C=
\left(
\begin{array}{ccc}
c_{10}& \ldots & c_{h,0}\\
\vdots& & \vdots \\
c_{1,k-1}&\ldots & c_{h,k-1}
\end{array}
\right),~
G=
\left(\!\!\!
\begin{array}{ccc}
\sum_{j=1}^ng_j(e_{i_1}) & \hdots & \sum_{j=1}^ng_j(e_{i_h})\\
\sum_{j=1}^ng_j(e_{i_1})s_j &\hdots & \sum_{j=1}^ng_j(e_{i_h})s_j\\
\vdots&& \vdots\\
\sum_{j=1}^ng_j(e_{i_1})s_j^{q^{(M-1)}} &\hdots&\sum_{j=1}^ng_j(e_{i_h})s_j^{q^{(M-1)}}
\end{array}
\!\!\!\right)
\]
are known to the malicious node, while the $k\times (M+1)$ matrix $A$ given by 
\[
A=
\left(
\begin{array}{cccc}
a_{0,0} & a_{1,0} & \hdots  & a_{M,0}\\
a_{0,1} & a_{1,1} &  & a_{M,1} \\
  \vdots   & \vdots        &  & \vdots   \\
a_{0,k-1} & a_{1,k-1} & \hdots & a_{M,k-1}
\end{array}
\right)
\]
is to be found. 
$A$ has on its $i$th column the coefficients $a_{i0},\ldots,a_{i,k-1}$ of the 
$i$th secret polynomial $P_i$, and thus 
contains all the coefficients of the source's private keys.

Let us now assume that $K$ nodes collaborate to make a
substitution attack. Each of them first obtains vectors of data from the 
network, and can thus collect a system of linear equations of the form
\[
AG_i=C_i,~i=1,\ldots,K,
\]
as explained in (\ref{eq:systeq}). The number of columns of $G_i$ depends on 
the number of incoming edges $h_i$ at the $i$th corrupted node. 
All together, this gives a new system of linear equations of the form
\[
A_{k\times (M+1)}\Gc_{(M+1)\times (h_1+\ldots+h_K)}=\Cc_{k\times (h_1+\ldots+h_K)}
\]
with
\[
\Gc=[G_1~G_2~\ldots G_K],~\Cc=[C_1,\ldots,C_K]
\]
where all matrices $\Gc$, $\Cc$ and $A$ have coefficients in $\FF_{q^l}$.

We now take into account that some of the nodes who are given the
private keys to check the authentication could be corrupted. Since we assume
a group of $K$ malicious nodes, let us furthermore assume the worst case,
namely that all of them actually possess a private key
$(P_0(x_i),\ldots,P_M(x_i))$, where $i$ belongs to a subset of cardinality
$K$ of $\{1,\ldots,V\}$. Without loss of generality we can assume that
$i$ goes from $1$ to $K$.

Since the values $x_1,\ldots,x_V$ are made public, the group of adversaries
can actually build another system of linear equations which exploits their
knowledge of the private keys, namely
\[
X_{K\times k}A_{k\times (M+1)}=P_{K\times (M+1)}
\]
where
\[
X=
\left(
\begin{array}{cccc}
1 & x_1 & \ldots & x_1^{k-1}\\
1 & x_2 & \ldots & x_2^{k-1}\\
\vdots & \vdots & & \vdots \\
1 & x_K & \ldots & x_K^{k-1}
\end{array}
\right)
\]
contains the public key values, as before
\[
A=
\left(
\begin{array}{cccc}
a_{0,0} & a_{1,0} & \ldots & a_{M,0}\\
a_{0,1} & a_{1,1} & \ldots & a_{M,1} \\
 \vdots      &  \vdots       &  & \vdots   \\
a_{0,k-1} & a_{1,k-1} & \ldots & a_{M,k-1}
\end{array}
\right)
\]
contains the coefficients 
of the private key to be found by the group of attackers,
and
\[
P=
\left(
\begin{array}{cccc}
P_0(x_1) & P_1(x_1) & \ldots & P_M(x_1)\\
P_0(x_2) & P_1(x_2) & \ldots & P_M(x_2)\\
   \vdots    & \vdots        &  &  \vdots  \\
P_0(x_K) & P_1(x_K) & \ldots & P_M(x_K)\\
\end{array}
\right),
\]
contains the private keys of the corrupted nodes.

Since the polynomials $P_0,\ldots,P_M$ have degree $k-1$, it is clear
that $K$ can be at most $k-1$, otherwise from the knowledge of only
the private and public keys, the group of attackers can recover the source's
private key, i.e., they can solve the system of equations and
recover $A$.

By putting together the information given by the private keys and the one 
gathered from all the received vectors, the group of adversaries has now the 
knowledge of the following linear systems of equations for trying to find the 
source private key:
\[
A_{k\times (M+1)}\Gc_{(M+1)\times H}=\Cc_{k\times H},
~X_{K\times k}A_{k\times (M+1)}=P_{K\times (M+1)},
\]
where $H=h_1+\ldots+h_K$ is the aggregated number of incoming edges for 
all corrupted nodes and $K\leq k-1$.


\subsection{Main analysis}

Let us start this part by proving some technical lemmas.

\begin{lemma}\label{lem:poly}
Consider the finite field $\FF_{q'}$ and the polynomial $F(x,y)$ in 
$\FF_{q'}[x,y]$ given by
\[
F(x,y)=(x-\alpha_1)\ldots(x-\alpha_Q)(y-\beta_1)\ldots(y-\beta_R)
\]
of degree $Q$ in $x$ and $R$ in $y$. Then there exists a $(Q+1)\times(R+1)$ 
matrix $A$ such that
\[
\left(
\begin{array}{cccc}
1 & \alpha_1 & \ldots & \alpha_1^Q\\
1 & \alpha_2 & \ldots & \alpha_2^Q\\
\vdots & & &  \vdots\\
1 & \alpha_q & \ldots & \alpha_q^Q\\
\end{array}
\right)A={\bf 0}_{q\times (R+1)}\mbox{ and }
A
\left(
\begin{array}{cccc}
1 & 1 & \ldots & 1 \\
\beta_1 & \beta_2 & \ldots & \beta_r\\
\vdots & & &  \vdots\\
\beta_1^R & \beta_2^R & \ldots & \beta_r^R
\end{array}
\right)={\bf 0}_{(Q+1)\times r},
\]
for $1\leq q\leq Q$ and $1\leq r \leq R$.
\end{lemma}
\begin{proof}
Let us develop the products in $x$ and $y$ of
$F(x,y)=(x-\alpha_1)\ldots(x-\alpha_Q)(y-\beta_1)\ldots(y-\beta_R)$ 
respectively to get
\[
a(x)=(x-\alpha_1)\ldots(x-\alpha_Q)=a_0+a_1x+\ldots+a_Qx^Q
\]
and
\[
b(y)=(y-\beta_1)\ldots(y-\beta_R)=b_0+b_1y+\ldots+b_Ry^R.
\]
Now we can write
\[
F(x,y)=a(x)b(y)=(1,x,\ldots,x^Q)
\underbrace{
\left(
\begin{array}{c}
a_0\\
a_1 \\
\vdots\\
a^Q
\end{array}
\right)
(b_0,b_1,\ldots,b_R)}_A
\left(
\begin{array}{c}
1\\
y \\
\vdots\\
y^R
\end{array}
\right)
\]
for the matrix $A$ with coefficients in $\FF_{q'}$.
Since $F(\alpha_q,y)=0$ for $1\leq q \leq Q$, we have that
\[
F(\alpha_q,y)=(1,\alpha_q,\ldots,\alpha_q^Q)A
\left(
\begin{array}{c}
1\\
y \\
\vdots\\
y^R
\end{array}
\right)=0
\] 
for all $y$ which proves the first equality. 
The claim follows similarly by using that $F(x,\beta_r)=0$ for
$1\leq r \leq R$.
\end{proof}

\begin{example}\rm
Take
\begin{eqnarray*}
F(x,y)&=&(x-\alpha_1)(y-\beta_1)(y-\beta_2) \\
      &=&(x-\alpha_1)(y^2+y(-\beta_1-\beta_2)+\beta_1\beta_2).
\end{eqnarray*}
We have that
\begin{eqnarray*}
F(x,y)&=&
(1,x)
\left(
\begin{array}{c}
-\alpha_1 \\
1
\end{array}\right)
(\beta_1\beta_2,-\beta_1-\beta_2,1)
\left(
\begin{array}{c}
1 \\
y \\
y^2
\end{array}\right)\\
&=&
(1,x)
\underbrace{
\left(
\begin{array}{ccc}
-\alpha_1\beta_1\beta_2 & \alpha_1(\beta_1+\beta_2)&-\alpha_1 \\
\beta_1\beta_2 &-(\beta_1+\beta_2) & 1
\end{array}
\right)}_{A}
\left(
\begin{array}{c}
1 \\
y \\
y^2
\end{array}
\right).
\end{eqnarray*}
Thus
\[
F(\alpha_1,y)=(1,\alpha_1)A
\left(
\begin{array}{c}
1 \\
y \\
y^2
\end{array}
\right)=0
\]
and 
\[
(1,\alpha_1)A=(0,0,0).
\]
\end{example}

\begin{lemma}\label{lem:newpoly}
Consider the finite field $\FF_{q'}$.
\begin{enumerate}
\item
Let
\[
b(y)=b_0+b_1y+b_2y^2+\ldots+b_qy^q+\ldots+b_{q^{M-1}}y^{q^{M-1}}
\]
be a polynomial in $\FF_{q'}[y]$. 
If all the coefficients $b_i$ are zero, but for the $M+1$ coefficients $b_0$ 
and $b_{q^j}$, $j=0,\ldots,M-1$ which can take any values in $\FF_{q'}$, then 
for all choices of $\gamma_1,\ldots,\gamma_H$ in $\FF_{q'}$, there exists a 
polynomial $c(y)\in \FF_{q'}[y]$ of degree $q^{M-1}-H$ such that 
\[
b(y)=(y-\gamma_1)\cdots(y-\gamma_H)c(y)
\]  
provided that $H\leq M$.
\item
Consider the polynomial $F(x,y)$ in $\FF_{q'}[x,y]$ given by
\[
F(x,y)=(x-\alpha_1)\cdots(x-\alpha_Q)b(y)
\]
of degree $Q$ in $x$ and where $b(y)=b_0+b_1y+b_2y^q+\ldots+b_My^{q^{M-1}}$ 
is as above, in particular it is of degree $q^{M-1}$ and has 
$\gamma_1,\ldots,\gamma_H \in \FF_{q'}$ as roots. Then there exists a 
$(Q+1)\times(M+1)$ matrix $A$ such that
\[
\left(
\begin{array}{cccc}
1 & \alpha_1 & \ldots & \alpha_1^Q\\
1 & \alpha_2 & \ldots & \alpha_2^Q\\
\vdots\\
1 & \alpha_q & \ldots & \alpha_q^Q\\
\end{array}
\right)A={\bf 0}_{q\times (M+1)}\mbox{ and }
A
\left(
\begin{array}{cccc}
1 & 1 & \ldots & 1 \\
\gamma_1 & \gamma_2 & \ldots & \gamma_H\\
\gamma_1^q & \gamma_2^q & \ldots & \gamma_H^q\\
\vdots\\
\gamma_1^{q^{M-1}} & \gamma_2^{q^{M-1}} & \ldots & \gamma_H^{q^{M-1}}
\end{array}
\right)={\bf 0}_{(Q+1)\times H},
\]
for $1\leq q\leq Q$ and $1\leq H \leq M$.
\end{enumerate}
\end{lemma}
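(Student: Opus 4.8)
The plan is to reduce Lemma~\ref{lem:newpoly} to Lemma~\ref{lem:poly} by absorbing the special ``Frobenius-type'' sparsity of $b(y)$ into the structure already established there. First I would prove part~(1): the polynomial $b(y)$ has only $M+1$ possibly-nonzero coefficients, sitting in positions $0$ and $q^{j}$ for $j=0,\ldots,M-1$, so the map sending $(b_0,b_1,b_{q},\ldots,b_{q^{M-1}})$ to the vector of evaluations $(b(\gamma_1),\ldots,b(\gamma_H))$ is linear in these $M+1$ free parameters, governed by the $H\times(M+1)$ matrix whose $i$th row is $(1,\gamma_i,\gamma_i^{q},\ldots,\gamma_i^{q^{M-1}})$. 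Since $H\leq M$, this matrix has more columns than rows, so the linear system ``$b(\gamma_i)=0$ for all $i$'' has a nontrivial solution space; picking any $b(y)\not\equiv 0$ in it and dividing out the (distinct-or-not) linear factors $(y-\gamma_i)$ one at a time via the factor theorem gives $b(y)=(y-\gamma_1)\cdots(y-\gamma_H)c(y)$ with $\deg c = q^{M-1}-H$. The one subtlety is repeated $\gamma_i$: the factor theorem as stated only guarantees divisibility by $\prod (y-\gamma_i)$ when the $\gamma_i$ are distinct, so for a clean statement one either assumes the $\gamma_i$ distinct or notes that the conclusion as written is about the product of linear factors with their stated multiplicities, which still follows because each successive division is legitimate once $b$ vanishes at that point --- but honestly I expect the intended reading (matching the application, where the $\gamma_i$ are the distinct data symbols $s_j$) is that they are distinct, and I would flag that.

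For part~(2), the idea is to mimic the factorization trick of Lemma~\ref{lem:poly} exactly. Write $a(x)=(x-\alpha_1)\cdots(x-\alpha_Q)=a_0+a_1x+\cdots+a_Qx^Q$ and write $b(y)$ in its \emph{sparse} coordinates as $b(y)=\tilde b_0\cdot 1 + \tilde b_1\cdot y^{q^0} + \cdots$; more precisely, let $\bv = (b_0,b_{q^0},b_{q^1},\ldots,b_{q^{M-1}})^{\mathsf T}\in\FF_{q'}^{M+1}$ collect the free coefficients. Then for any $x,y$ we have the identity
\[
F(x,y) = a(x)\,b(y) = (1,x,\ldots,x^Q)\,\av\,\bv^{\mathsf T}\,(1, y, y^{q}, \ldots, y^{q^{M-1}})^{\mathsf T},
\]
where $\av=(a_0,\ldots,a_Q)^{\mathsf T}$ and the inner monomial vector in $y$ uses exactly the exponents $0, q^0, q^1,\ldots,q^{M-1}$ that actually occur in $b$. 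Defining $A := \av\,\bv^{\mathsf T}$, a rank-one $(Q+1)\times(M+1)$ matrix, I claim it does the job. For the left relation: for $1\le q\le Q$ we have $a(\alpha_q)=0$, hence $(1,\alpha_q,\ldots,\alpha_q^Q)\av = 0$, so $(1,\alpha_q,\ldots,\alpha_q^Q)A = 0$; stacking these $q$ rows gives the first zero matrix. For the right relation: for each root $\gamma_r$ of $b$ we have $b(\gamma_r)=0$, which in sparse coordinates is precisely $\bv^{\mathsf T}(1,\gamma_r,\gamma_r^{q},\ldots,\gamma_r^{q^{M-1}})^{\mathsf T}=0$, hence $A(1,\gamma_r,\gamma_r^{q},\ldots,\gamma_r^{q^{M-1}})^{\mathsf T}= \av\cdot 0 = 0$; stacking $H$ such columns gives the second zero matrix. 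Part~(1) is what guarantees such roots $\gamma_1,\ldots,\gamma_H$ exist for $H\le M$, closing the loop.

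The main obstacle is not any single computation but getting the bookkeeping of the $y$-monomial basis right: unlike Lemma~\ref{lem:poly}, where the matrix on the right is an ordinary Vandermonde in $\beta_r$, here the rows are indexed by the \emph{Frobenius powers} $q^{j}$, so one must be careful that the ``$A$'' produced is genuinely $(Q+1)\times(M+1)$ (one column per surviving coefficient of $b$) and that the claimed vanishing vectors $(1,\gamma_r,\gamma_r^{q},\ldots,\gamma_r^{q^{M-1}})^{\mathsf T}$ really are annihilated --- which is exactly the statement $b(\gamma_r)=0$ rewritten. A secondary point worth stating explicitly is why $H\le M$ is the right bound: it is precisely the condition under which the $H\times(M+1)$ ``Moore-type'' matrix $(\gamma_r^{q^{j}})$ is short enough to guarantee a nonzero $b$ in its kernel, i.e.\ the pigeonhole count $H < M+1$. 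Everything else is the same rank-one factorization argument already rehearsed in Lemma~\ref{lem:poly}, so I would present part~(2) quickly by reference to it and spend the bulk of the writing on part~(1) and on the Frobenius-exponent indexing.
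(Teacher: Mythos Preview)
Your proof of part~(2) matches the paper's exactly: both take $A$ to be the rank-one outer product of the coefficient vector of $a(x)$ with that of $b(y)$ and read off the two annihilation conditions from $a(\alpha_i)=0$ and $b(\gamma_r)=0$.

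For part~(1), however, you and the paper take genuinely different routes. The paper fixes $d(y)=\prod_i(y-\gamma_i)$ first and writes the equation $d(y)c(y)=b(y)$ as a linear system in the unknown coefficients of $c$, via a $(q^{M-1}+1)\times(q^{M-1}-H+1)$ Toeplitz (convolution) matrix $D$ built from the coefficients of $d$. After deleting the $M+1$ rows corresponding to the unconstrained coefficients of $b$, the bound $\mathrm{rk}(D)\le q^{M-1}-M$ forces a nontrivial kernel whenever $H\le M$, which produces $c$. You instead work directly with the $M+1$ free coefficients of $b$ and observe that the evaluation map $b\mapsto(b(\gamma_1),\ldots,b(\gamma_H))$ is an $H\times(M+1)$ Moore-type matrix, underdetermined when $H\le M$; then you invoke the factor theorem. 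Your argument is shorter and makes the origin of the bound $H\le M$ more transparent.

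The trade-off is exactly the issue you flag: when the $\gamma_i$ are not all distinct, your evaluation constraints collapse (repeated rows), and the factor theorem only yields divisibility by the radical $\prod_{\gamma\ \mathrm{distinct}}(y-\gamma)$, not by the full product with multiplicities. Your parenthetical ``each successive division is legitimate once $b$ vanishes at that point'' is not correct as written: after dividing out one copy of $(y-\gamma_1)$, the quotient need not vanish at $\gamma_1$ again. The paper's convolution approach sidesteps this entirely, since $d(y)$ is formed with multiplicities from the outset and the argument never appeals to evaluation. In the downstream application (Lemma~\ref{lem:nbsol}) the $\gamma_k$ arise from incoming-edge data and are not a priori distinct, so if you want your argument to cover the lemma in full generality you should either adopt the paper's convolution setup for part~(1) or add a remark that linearly dependent columns of $\Gc$ may be discarded without loss, reducing to the distinct case.
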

\begin{proof}
\begin{enumerate}
\item
Consider the polynomial 
\[
b(y)=b_0+b_1y+b_2y^2+\ldots+b_qy^q+\ldots+b_{q^{M-1}}y^{q^{M-1}}
\]
where all the coefficients $b_i$ are zero, but for the $M+1$ coefficients 
$b_0$ and $b_{q^j}$, $j=0,\ldots,M-1$ which can take any values in $\FF_{q'}$. 
For all choices of $\gamma_1,\ldots,\gamma_H$ 
in $\FF_{q'}$, we can form the polynomial $d(y)$ by defining
\[
d(y)=(y-\gamma_1)(y-\gamma_2)\cdots(y-\gamma_H).
\]
What we claim is that, provided that $H\leq M$, there exists a polynomial 
$c(y)\in \FF_{q'}[y]$ such that 
\[
b(y)=(y-\gamma_1)\cdots(y-\gamma_H)c(y)=d(y)c(y),
\]  
or in other words, we can choose $c(y)$ such that 
\[
(y-\gamma_1)\cdots(y-\gamma_H)c(y)
\]
is a polynomial whose coefficients are all zero but for $M+1$ of them, 
which are the constant term and the $q^j$th term, for $j=0,\ldots,M-1$.

To prove this, let us write $d(y)$ as
\[
d(y)=d_0+d_1y+d_2y^2+\ldots+d_Hy^H.
\]
The equation $d(y)c(y)=b(y)$ can be rewritten, by identifying the coefficients 
of $y$, as
\[
\underbrace{
\left(
\begin{array}{cccc}
d_0 & 0   & & \\
d_1 & d_0 & & \\
\vdots & d_1 & \ddots& \\
d_H & \vdots &  &   d_0\\
0 & d_H & &  \\
   &  & \ddots &  \\
   &  &         & d_H\\
\end{array}
\right)}_{D \mbox{ {\footnotesize of size} }(q^{M-1}+1)\times(q^{M-1}-H+1)}
\left(
\begin{array}{c}
c_0 \\
c_1 \\
\vdots \\
c_{q^{M-1}-H}
\end{array}
\right)
=
\left(
\begin{array}{c}
b_0 \\
b_1 \\
\vdots \\
b_{q^{M-1}}
\end{array}
\right).
\]
Among the $q^{M-1}+1$ coefficients $b_i$, we do not have any constraint 
on the constant term and the $q^j$th term $j=0,\ldots,M-1$, which can take 
any value. We only have as constraints that the other coefficients are zero. 
We thus care about $q^{M-1}+1-(M+1)=q^{M-1}-M$ of them, which means we can 
remove $M+1$ rows from both sides of the above system of equations. The matrix 
$D$ containing the coefficients $d_i$ is now a $(q^{M-1}-M)\times (q^{M-1}-H+1)$ 
matrix. Any 
wanted polynomial $c(y)$ corresponds to a vector $(c_0,\ldots,c_{q^{M-1}-H})$ 
which belongs to the kernel of $D$. For this vector to exist and be non-zero, 
we need the kernel of $D$ to be of dimension at least 1, for which the rank 
${\rm rk}(D)$ of $D$ must be smaller or equal to $q^{M-1}-H$. Now we have that 
\[
{\rm rk}(D)\leq \min (q^{M-1}-M,q^{M-1}-H+1).
\] 
Thus if $H \leq M$ as assumed, we get that
\[
{\rm rk}(D)\leq \min (q^{M-1}-M,q^{M-1}-H+1)=q^{M-1}-M \leq q^{M-1}-H
\] 
and we are done.
\item
As in the proof of Lemma \ref{lem:poly}, we first develop the product in $x$ 
from $F(x,y)=(x-\alpha_1)\cdots(x-\alpha_Q)b(y)$ to get
\[
a(x)=(x-\alpha_1)\ldots(x-\alpha_Q)=a_0+a_1x+\ldots+a_Qx^Q.
\]
Since $b(y)$ is given by
\[
b(y)=(y-\gamma_1)\ldots(y-\gamma_H)c(y)=b_0+b_1y+b_2y^q\ldots+b_My^{q^{M-1}},
\]
we can write
\[
F(x,y)=(1,x,\ldots,x^Q)
\underbrace{
\left(
\begin{array}{c}
a_0\\
a_1 \\
a_2\\
\vdots\\
a_Q
\end{array}
\right)
(b_0,b_1,b_2,\ldots,b_M)}_A
\left(
\begin{array}{c}
1\\
y \\
y^q\\
\vdots\\
y^{q^{M-1}}
\end{array}
\right)
\]
for the matrix $A$ with coefficients in $\FF_{q'}$.
Since $F(\alpha_q,y)=0$ for $1\leq q \leq Q$, we have that
\[
F(\alpha_q,y)=(1,\alpha_q,\ldots,\alpha_q^Q)A
\left(
\begin{array}{c}
1\\
y \\
y^q\\
\vdots\\
y^{q^{M-1}}
\end{array}
\right)=0
\] 
for all $y$ which proves the first equality.
The claim follows similarly by using that $F(x,\gamma_H)=0$ for
$1\leq H \leq M$, by the previous point of the lemma. In words, the number 
of rows and columns of the matrix $A$ are decided by the number of 
(non-zero) coefficients in the polynomial $a(x)$ and $b(y)$ respectively. 
On the other hand, the number of rows of the matrices with coefficients in 
$\gamma$ and in $\alpha$ depends on the number of roots of the respective 
polynomials.  
\end{enumerate}
\end{proof}

\begin{example}\rm
\begin{itemize}
\item
Take first $q=2$ and $M=3$. For any choice of $\gamma_1,\gamma_2,\gamma_3$, we 
can define
\[
d(y)=(y-\gamma_1)(y-\gamma_2)(y-\gamma_3).
\]
Now since
\[
b(y)=b_0+b_1y+b_2y^2+b_3y^4,
\]
this means that we are looking for a linear polynomial 
\[
c(y)=y-\gamma_4.
\]
It is easy to see here that we can choose $\gamma_4=\gamma_1+\gamma_2+\gamma_3$.
\item
Take $q=2$ and $M=4$. We have for any choice of 
$\gamma_1,\gamma_2,\gamma_3,\gamma_4$ that
\begin{eqnarray*}
d(y)&=&(y-\gamma_1)(y-\gamma_2)(y-\gamma_3)(y-\gamma_4)\\
    &=& d_0+d_1y+d_2y^2+d_3y^3+d_4y^4
\end{eqnarray*}
with
\begin{eqnarray*}
d_0 &=& \gamma_1\gamma_2 \gamma_3 \gamma_4 \\
d_1 &=& -\gamma_1\gamma_2\gamma_3 -\gamma_1\gamma_2\gamma_4 -\gamma_1 \gamma_3 \gamma_4  -\gamma_2 \gamma_3 \gamma_4 \\
d_2 &=& \gamma_1 \gamma_2+\gamma_1 \gamma_3+\gamma_2\gamma_3+\gamma_1\gamma_4 +\gamma_2 \gamma_4 +\gamma_3 \gamma_4\\
d_3 & =& -\gamma_1-\gamma_2 -\gamma_3- \gamma_4 \\
d_4 & = & 1.
\end{eqnarray*}
The polynomial $b(y)$ is given by 
\[
b(y)=b_0+b_1y+b_2y^2+b_3y^4+b_4y^8,
\]
and in order to find a polynomial $c(y)=c_0+c_1y+c_2y^2+c_3y^3+c_4y^4+c_5y^5$ 
such that $c(y)d(y)=b(y)$, we have to solve the following system of equations:
\[
\left(
\begin{array}{ccccc}
d_3 & d_2& d_1 & d_0 & 0\\
0   & d_4& d_3 & d_2 & d_1 \\
0   & 0  & d_4 & d_3 & d_2 \\
0   & 0  &  0   & d_4 & d_3
\end{array}
\right)
\left(
\begin{array}{c}
c_0 \\
c_1 \\
c_2 \\
c_3 \\
c_4 \\
c_5
\end{array}
\right)
=
\left(
\begin{array}{c}
0\\
0\\
0\\
0
\end{array}
\right).
\]
Clearly the dimension of the kernel is at least 1.
\end{itemize}
\end{example}

\begin{lemma}\label{lem:nbsol}
If $K\leq k-1$ and $H \leq M$, there exist $q^l$ matrices $A_{k\times (M+1)}$ 
with coefficients in $\FF_{q^l}$ such that
\[
A_{k\times (M+1)}\Gc_{(M+1)\times H}={\bf 0}_{k\times H},
~X_{K\times k}A={\bf 0}_{K\times (M+1)}.
\]
\end{lemma}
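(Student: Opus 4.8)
The plan is to exhibit a single nonzero solution $A_0$ and then note that the solution set of the homogeneous system $A\Gc={\bf 0}_{k\times H}$, $XA={\bf 0}_{K\times(M+1)}$ is an $\FF_{q^l}$-vector space, so it automatically contains the $q^l$ pairwise distinct matrices $\lambda A_0$, $\lambda\in\FF_{q^l}$, which is exactly what is asserted. To find $A_0$ I would search for it in rank-one form $A_0=\mathbf{a}\mathbf{b}^\top$ with $\mathbf{a}\in\FF_{q^l}^{k}$ and $\mathbf{b}\in\FF_{q^l}^{M+1}$ column vectors: then $XA_0=(X\mathbf{a})\mathbf{b}^\top$ and $A_0\Gc=\mathbf{a}(\mathbf{b}^\top\Gc)$, so the two matrix equations decouple into the two independent linear conditions $X\mathbf{a}={\bf 0}$ and $\mathbf{b}^\top\Gc={\bf 0}$, which can be handled separately.

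Each of these conditions is nontrivially solvable precisely because of one of the two numerical hypotheses. The matrix $X$ has size $K\times k$ with $K\le k-1<k$, i.e.\ fewer rows than columns, so its right kernel is nonzero; concretely one may take $\mathbf{a}$ to be the coefficient vector of the degree-$(k-1)$ polynomial $(x-x_1)\cdots(x-x_K)\,x^{k-1-K}$, which vanishes at each public value $x_1,\dots,x_K$. The matrix $\Gc$ has size $(M+1)\times H$ with $H\le M<M+1$, so $\{\wv:\wv^\top\Gc={\bf 0}\}$ has dimension at least $(M+1)-H\ge 1$; pick any nonzero $\mathbf{b}$ in it. Since $\mathbf{a}\ne{\bf 0}$ and $\mathbf{b}\ne{\bf 0}$, the $k\times(M+1)$ matrix $A_0=\mathbf{a}\mathbf{b}^\top$ is nonzero and satisfies $XA_0={\bf 0}$ and $A_0\Gc={\bf 0}$, and the map $\lambda\mapsto\lambda A_0$ is injective on $\FF_{q^l}$, producing the required $q^l$ matrices.

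In this form the proof is essentially a dimension count, and the only real point is the observation that the rank-one ansatz decouples the system into two conditions whose solvability is guaranteed term-by-term by ``$K\le k-1$'' and ``$H\le M$'' respectively; there is no serious obstacle. If instead one wishes to stay inside the polynomial framework, the same $A_0$ comes out of Lemma~\ref{lem:newpoly}(2) applied with $Q=k-1$, $\alpha_i=x_i$, and $a(x)=(x-x_1)\cdots(x-x_K)x^{k-1-K}$, taking for $b(y)$ a polynomial supported on the monomials $1,y,y^{q},\dots,y^{q^{M-1}}$ with coefficient vector $\mathbf{b}$; to make such a $b$ annihilate $\Gc$ one uses that each column of $\Gc$ equals $\sum_{j}g_j(e_m)(1,s_j,s_j^{q},\dots,s_j^{q^{M-1}})^\top$ and that the network-coding coefficients $g_j(e_m)\in\FF_q$ are fixed by the Frobenius $z\mapsto z^{q}$, so (when $\sum_j g_j(e_m)\ne 0$) this column is a scalar multiple of the vector attached to the single field element $\gamma_m=(\sum_j g_j(e_m)s_j)/(\sum_j g_j(e_m))$, whence Lemma~\ref{lem:newpoly}(1) supplies a nonzero $b$ with these at most $H\le M$ elements among its roots. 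The one nuisance on this second route is the degenerate column $\sum_j g_j(e_m)=0$, which must be absorbed by additionally forcing $b(0)=0$; this is the step I would expect to need the most care on that route, and it is precisely what the rank-one/linear-algebra argument above sidesteps.
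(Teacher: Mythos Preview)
Your proof is correct and shares the paper's core structure: reduce to finding one nonzero solution, take it in rank-one form $A_0=\mathbf{a}\mathbf{b}^\top$, and solve the two decoupled conditions $X\mathbf{a}=\mathbf{0}$ and $\mathbf{b}^\top\Gc=\mathbf{0}$ separately. The paper builds $\mathbf{a}$ from the coefficients of $(x-x_1)\cdots(x-x_K)$ just as you do, and builds $\mathbf{b}$ from the coefficients of a polynomial $b(y)=b_0+b_1y+\cdots+b_My^{q^{M-1}}$ vanishing at the $\gamma_m$.

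Where you differ is in how you obtain $\mathbf{b}$. The paper first right-multiplies $\Gc$ by an invertible matrix to force it into the ``Vandermonde-like'' shape with columns $(1,\gamma_m,\gamma_m^q,\ldots,\gamma_m^{q^{M-1}})^\top$, handling the degenerate case $\sum_j g_j(e_m)=0$ by an ad hoc column operation, and then invokes Lemma~\ref{lem:newpoly} to produce a sparse polynomial $b(y)$ with the $\gamma_m$ as roots. Your primary route bypasses all of this with the one-line rank argument that an $(M+1)\times H$ matrix with $H\le M$ has a nonzero left kernel. This is genuinely simpler and, as you note, sidesteps the case analysis on zero first-row entries that the paper has to carry out. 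You also treat the case $K<k-1$ explicitly via the factor $x^{k-1-K}$, whereas the paper's construction yields $\mathbf{a}$ of length $K+1$ and only matches the required length $k$ when $K=k-1$; your padding closes that gap cleanly.
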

\begin{proof}
Let $A=A_{k\times (M+1)}$ be a solution to the above system of equations. 
Then the matrices $rA$ obtained by multiplication with a scalar $r\in\FF_{q^l}$ 
are also clearly solutions, and it is thus enough to show that there exists 
one suitable matrix $A$.

To prove that such a matrix exists, we use Lemma \ref{lem:newpoly}, for which 
we will exhibit a suitable bivariate polynomial $F(x,y)=a(x)b(y)$.

Let us start by looking at the second equation. For any choice of $K$ public 
keys $x_1,\ldots,x_K$ in $\FF_{q^l}$, take the polynomial 
$a(x)=(x-x_1)\ldots (x-x_K)=a_0+a_1x+\ldots+a_Kx^K$. It is of degree $K$ 
and has for roots $x_1,\ldots,x_K$. Thus 
\[
a(x)=
(1,x,\ldots,x^K)
\left(
\begin{array}{c}
a_0\\
a_1 \\
a_2\\
\vdots\\
a_K
\end{array}
\right)
=0 \mbox{ for }x=x_1,\ldots,x_K.
\]

We now consider the first equation $A\Gc={\bf 0}$. We start by rewriting it 
in a different form. Recall that the matrix $\Gc$ is of the form
\[
\left(
\begin{array}{ccc}
\sum_{j=1}^ng_j(e_{i_1}) &\hdots & \sum_{j=1}^ng_j(e_{i_H})\\
\sum_{j=1}^ng_j(e_{i_1})s_j &\hdots & \sum_{j=1}^ng_j(e_{i_H})s_j\\
&&\\
\sum_{j=1}^ng_j(e_{i_1})s_j^{q^{(M-1)}} &\hdots&\sum_{j=1}^ng_j(e_{i_H})s_j^{q^{(M-1)}}
\end{array}
\right).
\]
Note that for any invertible matrix $\Dc$, we have that
\[
A\Gc={\bf 0} \iff A\Gc\Dc={\bf 0},
\]
and there exists an invertible matrix $\Dc$ such that $\Gc\Dc$ is of the
Vandermonde like form
\begin{equation}\label{eq:vand}
\left(
\begin{array}{ccc}
1             &\hdots & 1 \\
\gamma_1      &\hdots &\gamma_H\\
\gamma_1^q    & &\gamma_H^q\\
              & & \\
\gamma_1^{q^{M-1}} &\hdots& \gamma_H^{q^{M-1}}
\end{array}
\right).
\end{equation}
Indeed, if all the coefficients of the first row of $\Gc$ are non zero,
we can take $\Dc$ to be
\[
\Dc=diag((\sum_{j=1}^ng_j(e_{i_1}))^{-1},\ldots,(\sum_{j=1}^ng_j(e_{i_H}))^{-1}),
\]
in which case we have
\[
\gamma_k= \frac{\sum_{j=1}^ng_j(e_{i_k})s_j}{\sum_{j=1}^ng_j(e_{i_k})}
\]
where the denominator is in $\FF_q$ since it only depends on the network coding 
coefficients.
If the $i$th coefficient (say the first for example) of the first row is zero, 
then we can first compute $\Gc\Sc$ with
\[
\Sc
=
\left(
\begin{array}{cccc}
1  & 0 &  & 0 \\
1  & 0 &  & 0 \\
0  & &{\bf I}_{H-2}&\\
\end{array}
\right)
\]
which yields
\[
\left(
\begin{array}{ccc}
\sum_{j=1}^ng_j(e_{i_2}) &\hdots & \sum_{j=1}^ng_j(e_{i_H})\\
\sum_{j=1}^ng_j(e_{i_1})s_j+\sum_{j=1}^ng_j(e_{i_2})s_j &\hdots & \sum_{j=1}^ng_j(e_{i_H})s_j\\
&&\\
\sum_{j=1}^ng_j(e_{i_1})s_j^{q^{(M-1)}}+\sum_{j=1}^ng_j(e_{i_2})s_j^{q^{(M-1)}} &\hdots&\sum_{j=1}^ng_j(e_{i_H})s_j^{q^{(M-1)}}
\end{array}
\right).
\]
Now take
\[
\Dc'=diag((\sum_{j=1}^ng_j(e_{i_2}))^{-1},(\sum_{j=1}^ng_j(e_{i_2}))^{-1},\ldots,(\sum_{j=1}^ng_j(e_{i_H}))^{-1})
\]
and $\Dc=\Sc\Dc'$ to finally obtain
\[
\gamma_1= \frac{\sum_{j=1}^ng_j(e_{i_1})s_j+\sum_{j=1}^ng_j(e_{i_2})s_j}{\sum_{j=1}^ng_j(e_{i_2})},~
\gamma_k= \frac{\sum_{j=1}^ng_j(e_{i_k})s_j}{\sum_{j=1}^ng_j(e_{i_k})},~k\geq 2.
\]
Thus we can assume that we look at
\[
A\Gc={\bf 0}
\]
with $\Gc$ of the form (\ref{eq:vand}).

Now for all choices of $\gamma_1,\ldots,\gamma_H$, consider the polynomial 
$b(y)$ in $\FF_{q^l}[y]$ such that $b(\gamma_i)=0$, $i=1,\ldots,H$, but also 
such that $b(y)=b_0+b_1y+b_2y^q+\ldots+b_My^{q^{(M-1)}}$. Such polynomial 
exists by the first part of Lemma \ref{lem:newpoly}, as long as $H\leq M$. 
Thus
\[
b(y)=
(1,y,y^q,\ldots,y^{q^{M-1}})
\left(
\begin{array}{c}
b_0\\
b_1 \\
b_2\\
\vdots\\
b_M
\end{array}
\right)
=0 \mbox{ for }y=\gamma_1,\ldots,\gamma_H.
\]

We can finally write
\[
F(x,y)=(1,x,\ldots,x^K)
\underbrace{
\left(
\begin{array}{c}
a_0\\
a_1 \\
a_2\\
\vdots\\
a_K
\end{array}
\right)
(b_0,b_1,b_2,\ldots,b_M)}_A
\left(
\begin{array}{c}
1\\
y \\
y^q\\
\vdots\\
y^{q^{M-1}}
\end{array}
\right)
\]
for the $(K+1)\times (M+1)$ matrix $A$ with coefficients in $\FF_{q'}$.
By the second part of Lemma \ref{lem:newpoly}, $A$ satisfies
\[
\left(
\begin{array}{cccc}
1 & x_1 & \ldots & x_1^K\\
1 & x_2 & \ldots & x_2^K\\
\vdots\\
1 & x_K & \ldots & x_K^K\\
\end{array}
\right)A
={\bf 0}_{K\times (M+1)}
\]
and
\[
A
\left(
\begin{array}{cccc}
1 & 1 & \ldots & 1 \\
\gamma_1 & \gamma_2 & \ldots & \gamma_h\\
\gamma_1^q & \gamma_2^q & \ldots & \gamma_h^q\\
\vdots\\
\gamma_1^{q^{M-1}} & \gamma_2^{q^{M-1}} & \ldots & \gamma_h^{q^{M-1}}
\end{array}
\right)={\bf 0}_{(K+1)\times (M+1)}.
\]
Since the matrix $X$ build by the adversary has $k$ columns, we 
require $K+1=k$, that is $K=k-1$. This is indeed the assumption that 
we made on $K$ and $k$ in the hypothesis, and this can be interpreted 
by the fact that if $K \geq k$, then the adversaries can find the 
source's secret just from the matrix $X$. 
This concludes the proof.
\end{proof}

We are now ready to state the security of the proposed authentication scheme.

\begin{prop}
Consider a multicast network implementing linear network coding, among 
which nodes $V$ of them are verifying nodes owning a private key for 
authentication. 
The above scheme is a $(k,V,M)$ unconditionally secure network coding 
authentication code against a coalition of up to $k-1$ adversaries, possibly 
among the verifying nodes, in which every key can be used to authenticate up 
to $M$ messages, under the assumption that $H \leq M$, where $H$ is the sum 
of the incoming edges at each adversary.
\end{prop}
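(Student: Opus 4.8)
A coalition of $K\le k-1$ nodes, possibly including corrupted verifiers, with $H=h_1+\ldots+h_K\le M$ aggregated incoming edges, pools all of its observations into the two linear systems derived in the preliminaries: $A\,\Gc_{(M+1)\times H}=\Cc_{k\times H}$, coming from the tagged traffic that has flowed through the coalition, and $X_{K\times k}\,A=P_{K\times(M+1)}$, coming from the private keys of the corrupted verifiers. The genuine secret-key matrix $A$ solves both, so the joint system is consistent, and the set $\mathcal{A}$ of matrices consistent with everything the coalition knows is a translate of the solution space $W$ of the homogeneous system $A\,\Gc=\mathbf{0}$, $XA=\mathbf{0}$. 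By Lemma~\ref{lem:nbsol}, applicable precisely because $K\le k-1$ and $H\le M$, the space $W$ contains $q^l$ matrices, so it has dimension at least one over $\FF_{q^l}$; since the trusted authority draws $P_0,\ldots,P_M$ uniformly at random, every element of $\mathcal{A}$ is a posteriori equally likely to be the true key.

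\textbf{The substitution attack.} Let $R_j$ be an honest verifying node, $j\notin\{1,\ldots,K\}$, and suppose the coalition injects on one of its incoming edges a forged packet $[\mu_0,\mu_1,\tau(x)]$, with $\mu_0\in\FF_q$, $\mu_1\in\FF_{q^l}$ and $\tau$ of degree $k-1$, chosen after observing the traffic. By the verification rule, $R_j$ accepts if and only if $\tau(x_j)=P_0(x_j)\mu_0+P_1(x_j)\mu_1+P_2(x_j)\mu_1^q+\ldots+P_M(x_j)\mu_1^{q^{M-1}}$, and the right-hand side equals $(1,x_j,\ldots,x_j^{k-1})\,A\,(\mu_0,\mu_1,\mu_1^q,\ldots,\mu_1^{q^{M-1}})^{T}$, an affine-linear function of $A$ alone. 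The plan is to show that, as $A$ ranges over $\mathcal{A}$, this value is uniformly distributed over $\FF_{q^l}$: then whichever $\tau(x_j)$ the coalition commits to is the correct one for exactly a $1/q^l$ fraction of the equally likely candidates, so the acceptance probability is $1/q^l$, matching the perfect-protection bound $1/|\Tc|$ with $|\Tc|=q^l$ the number of tag values a single verifier can distinguish at its evaluation point. Because the map is affine-linear and $W\neq\{\mathbf{0}\}$, uniformity follows as soon as one exhibits a single $A_0\in W$ with $(1,x_j,\ldots,x_j^{k-1})\,A_0\,(\mu_0,\ldots,\mu_1^{q^{M-1}})^{T}\neq 0$. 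This is exactly where $j\notin\{1,\ldots,K\}$ enters: $x_j$ is a fresh evaluation point, so the row $(1,x_j,\ldots,x_j^{k-1})$ is not in the row span of $X$, and in the parametrized family of solutions built in the proof of Lemma~\ref{lem:nbsol} --- where $A=(a_0,\ldots,a_K)^{T}(b_0,\ldots,b_M)$ with $a(x)$ vanishing only at the corrupted $x_i$'s and the $(b_0,\ldots,b_M)$ part free apart from encoding a polynomial with prescribed roots $\gamma_1,\ldots,\gamma_H$ --- there remains enough freedom to keep this pairing nonzero.

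\textbf{Impersonation and conclusion.} In an impersonation attack the coalition has observed no tagged traffic, so only $XA=P$ is available; the candidate set is then a translate of the full solution space of $XA=\mathbf{0}$, which contains $W$, and the same affine-linearity argument caps the acceptance probability at $1/q^l$. Thus no coalition of size at most $k-1$, even one holding private keys, does better than a blind guess among the $q^l$ possible tag images at a verifying node, which is precisely the requirement for the scheme to be a $(k,V,M)$ unconditionally secure network coding authentication code; the hypothesis $H\le M$ is the one needed by Lemma~\ref{lem:nbsol}, and $K\le k-1$ is forced since $K\ge k$ would already let the coalition solve $XA=P$ uniquely for $A$. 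I expect the main obstacle to be the uniformity step of the second paragraph: showing that an honest verifier's evaluation point genuinely contributes a full $\FF_{q^l}$ worth of residual uncertainty on top of everything the coalition has pooled, which needs the explicit parametrization of solutions from Lemma~\ref{lem:nbsol} rather than a bare dimension count.
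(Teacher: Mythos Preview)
Your approach is essentially the paper's: both set up the coalition's knowledge as the pair of linear systems $A\Gc=\Cc$, $XA=P$, invoke Lemma~\ref{lem:nbsol} to produce $q^l$ solutions of the associated homogeneous system, and conclude that the success probability at an honest verifier is $1/q^l$. The paper's own proof is terser than yours---it moves directly from ``$q^l$ candidate source keys'' to ``$q^l$ equally likely private keys for $R_i$'' to ``probability $1/q^l$'' without isolating the uniformity step you flag; the implicit justification is that in the rank-one family $rA_0$ built in Lemma~\ref{lem:nbsol} one has $(1,x_j,\ldots,x_j^{k-1})A_0=a(x_j)(b_0,\ldots,b_M)$ with $a(x_j)\neq 0$ because $x_j\notin\{x_1,\ldots,x_K\}$, so distinct $r$ already give distinct private keys at $R_j$. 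Your instinct that this is the crux, and that it is the explicit construction of Lemma~\ref{lem:nbsol} rather than a bare dimension count that supplies it, is exactly right.
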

\begin{proof}
To make a substitution attack, the malicious $k-1$ verifying nodes want to 
generate a message such that it is accepted as authentic by any honest 
verifying node $R_i$ that they are trying to cheat. However, for that, they 
need to guess its secret key $[P_0(x_i),\ldots,P_M(x_i)]$, and choose a 
polynomial $\tilde{A}_s(x)$ such that
\[
\tilde{A}_s(x_i)=P_0(x_i)+s^qP_1(x_i)+\ldots+s^{q^{(M-1)}}P_M(x_i)
\]
for some message $s$. Gathering all they know after watching one transmission 
of tagged messages, the coalition of adversaries 
get the following system of equations:
\[
A_{k\times (M+1)}\Gc_{(M+1)\times H}=\Cc_{k\times H},
~X_{K\times k}A=P_{K\times (M+1)}.
\]
If there is no matrix $A_{k\times (M+1)}$ satisfying this system, the 
information gathered by the adversaries is not useful. Now if such a matrix 
$A_{k\times (M+1)}$ indeed exist, then there are actually $q^l$ of them 
satisfying these equations, given by  
\[
A_{k\times (M+1)}+A'_{k\times (M+1)},
\]
where $A'=A'_{k\times (M+1)}$ is a solution of the corresponding homogeneous 
system of equations, and Lemma \ref{lem:nbsol} tells us that there are $q^l$ 
such $A'$.
Thus there are $q^l$ different $(M+1)$-tuple of polynomials 
$(\tilde{P}_0(x),\ldots,\tilde{P}_M(x))$ likely to be the source's private 
key, from which that there are $q^l$ equally likely private keys 
for $R_i$. Thus the probability of the $k-1$ receivers to guess $A(x_i)$ 
correctly is $1/q^l$.
\end{proof}

\begin{example}\label{ex:net3}\rm
Let us go on with Example \ref{ex:net2}. The node $R_1$ has received
the vector
\[
\left(
\begin{array}{c}
\yv(e_1)\\
\yv(e_2)
\end{array}
\right)
\]
with
\[ \yv(e_1)=(g_1(e_1)+g_2(e_1),g_1(e_1)s_1+g_2(e_1)s_2,g_1(e_1)A_{s_1}(x)+g_2(e_1)A_{s_2}(x))
\]
and
\[
\yv(e_2)=(g_1(e_2)+g_2(e_2),g_1(e_2)s_1+g_2(e_2)s_2,g_1(e_2)A_{s_2}(x)+g_2(e_2)A_{s_2}(x)).
\] But this time, let us assume that the node $R_1$ is malicious, and instead of
checking the authentication tag, it actually wants to make a substitution attack.

Since we have that
\begin{eqnarray*}
A_{s_1}(x)\!\!\!\!&=&\!\!\!\!
P_0(x)+s_1P_1(x)+s_1^2P_2(x)\\
&=&
(a_{00}+a_{10}s_1+a_{20}s_1^2)+x(a_{01}+a_{11}s_1+a_{21}s_1^2)\\
&=:&b_{10}+xb_{11}\\
A_{s_2}(x)\!\!\!\!&=&\!\!\!\!
P_0(x)+s_2P_1(x)+s_2^2P_2(x)\\
&=&
(a_{00}+a_{10}s_2+a_{20}s_2^2)+x(a_{01}+a_{11}s_2+a_{21}s_2^2)\\
&=:&b_{20}+xb_{21},
\end{eqnarray*}
we can rewrite
\begin{eqnarray*}
g_1(e_1)A_{s_1}(x)+g_2(e_1)A_{s_2}(x)
&=&
g_1(e_1)(b_{10}+xb_{11})+g_2(e_1)(b_{20}+xb_{21})\\
&=&
g_1(e_1)b_{10}+g_2(e_1)b_{20}+x[g_1(e_1)b_{11}+g_2(e_1)b_{21}].
\end{eqnarray*}
The malicious node thus knows
\[
c_{10}=g_1(e_1)b_{10}+g_2(e_1)b_{20},~c_{11}=g_1(e_1)b_{11}+g_2(e_1)b_{21}.
\]
Alternatively, we can rewrite
\begin{eqnarray*}
&&g_1(e_1)A_{s_1}(x)+g_2(e_1)A_{s_2}(x)  \\
&=&g_1(e_1)(a_{00}+a_{10}s_1+a_{20}s_1^2)+g_2(e_1)(a_{00}+a_{10}s_2+a_{20}s^2)\\
&&+xg_1(e_1)(a_{01}+a_{11}s_1+a_{21}s_1^2)+xg_2(e_1)(a_{01}+a_{11}s_2+a_{21}s_2^2)\\
&=&
a_{00}(g_1(e_1)+g_2(e_1))+a_{10}(g_1(e_1)s_1+g_2(e_1)s_2)+a_{20}(g_1(e_1)s_1^2+g_2(e_1)s_2^2)\\
&&+x[a_{01}(g_1(e_1)+g_2(e_1))+a_{11}(g_1(e_1)s_1+g_2(e_1)s_2)+a_{21}(g_1(e_1)s_1^2+g_2(e_1)s_2^2)].
\end{eqnarray*}
Since the malicious node knows $g_1(e_1)+g_2(e_1)$, $g_1(e_1)s_1+g_2(e_1)s_2$
and $g_1(e_1)s_1^2+g_2(e_1)s_2^2$, and by iterating the computations for the
second incoming edge, it can form the following system of linear
equations:
\[
\left(
\begin{array}{ccc}
a_{0,0} & a_{1,0} & a_{2,0}\\
a_{0,1} & a_{1,1} & a_{2,1} \\
\end{array}
\right)
G
=
\left(
\begin{array}{cc}
c_{10} & c_{2,0}\\
c_{1,1}& c_{2,1}
\end{array}
\right)
\]
where
\[
G=
\left(
\begin{array}{cc}
g_1(e_1)+g_2(e_1)&g_1(e_2)+g_2(e_2)\\
g_1(e_1)s_1+g_2(e_1)s_2&g_1(e_2)s_1+g_2(e_2)s_2\\
g_1(e_1)s_1^2+g_2(e_1)s_2^2&g_1(e_2)s_1^2+g_2(e_2)s_2^2
\end{array}
\right).
\]
If $R_1$ is not a verifying node, it should prepare an attack based on the 
knowledge of this system of equations. We can illustrate the condition 
$H \leq M$ required for security. Suppose that it were not the case, that is 
$H=2$ but we have only $M=1$, meaning that only two polynomials $P_0$ and 
$P_1$ are used to create the authentication tag, then the matrix $G$ would be
a $2\times 2$ matrix, and thus could be very likely invertible, thus
allowing the malicious node to recover the secret coefficients of the source
private key, although the node cannot decode the message.

Now if furthermore $R_1$ has a private key $[P_0(x_1),P_1(x_1),P_2(x_1)]$,
it further knows that
 \[
(1,~x_1)
\left(
\begin{array}{ccc}
a_{0,0} & a_{1,0} & a_{2,0}\\
a_{0,1} & a_{1,1} & a_{2,1} \\
\end{array}
\right)=
(P_0(x_1),~P_1(x_1),~P_2(x_1)).
\]
Let us assume for this example that the first row of $G$ has non-zero
coefficients, so that both coefficients are invertible. We set
\begin{eqnarray*}
\gamma_1 & = & (g_1(e_1)s_1+g_2(e_1)s_2)(g_1(e_1)+g_2(e_1))^{-1}\\
\gamma_2 & = & (g_1(e_2)s_1+g_2(e_2)s_2)(g_1(e_2)+g_2(e_2))^{-1}
\end{eqnarray*}
and we can rewrite $G$ as
\[
\begin{array}{c}
\left(
\begin{array}{cc}
g_1(e_1)+g_2(e_1)&g_1(e_2)+g_2(e_2)\\
\gamma_1(g_1(e_1)+g_2(e_1))&\gamma_2(g_1(e_2)+g_2(e_2))\\
\gamma_1^2(g_1(e_1)+g_2(e_1))&\gamma_2^2(g_1(e_2)+g_2(e_2))\\
\end{array}
\right)
=
\\
\left(
\begin{array}{cc}
1 & 1\\
\gamma_1 & \gamma_2 \\
\gamma_1^2 & \gamma_2^2 \\
\end{array}
\right)
\left(
\begin{array}{cc}
g_1(e_1)+g_2(e_1) & 0 \\
0 & g_1(e_2)+g_2(e_2)
\end{array}
\right).
\end{array}
\]
It is a straightforward computation to check that the matrices
\[
rA=
r
\left(
\begin{array}{ccc}
-x_1\gamma_1\gamma_2 & x_1\gamma_1+x_1\gamma_2 & -x_1\\
\gamma_1\gamma_2 & -\gamma_1-\gamma_2 & 1 \\
\end{array}
\right),~r\in\FF_2^3
\]
satisfy the system of equations $AG={\bf 0},~XA={\bf 0}$, where $X=(1,~x_1)$.
\end{example}

%
%

\section{Multicast Goodput Analysis}
\label{sec:perf}

In this section, we discuss the performance of our scheme in terms
of multicast throughput and multicast goodput. The multicast
goodput is analyzed to assess the impact of pollution attacks in
network coding systems and to show how much the multicast
throughput is degraded under such attacks.

The analysis starts with definitions of multicast throughput and
multicast goodput. We then derive their characterizations in our
setting, depending on whether the proposed authentication scheme
is used. We provide three exemplary
topologies with various numbers of intermediate nodes, shown in
Figure \ref{pic:topo}, to illustrate the multicast throughput
gains obtained using our scheme.

\subsection{Definitions}

Recall that we have a single source $S$, sending $n$ messages to $T$ 
destination nodes $D_1,\ldots,D_T$, while $\Vc$ will denote the set of $V$ 
receivers $R_1,\ldots,R_V$ that can verify the authentication tags.
The intermediate nodes may or may not have been corrupted by malicious 
messages. We will denote by $\Rc_c$ a set of intermediate nodes with 
corrupted messages in their incoming buffers and by $\Rc_g$ a set of 
intermediate nodes with ``good" (i.e. non-corrupted) messages in their 
incoming buffers, with cardinality respectively $|\Rc_c|=r_c$ and 
$|\Rc_g|=r_g$.

We consider a single multicast session $\mathfrak{s}(S,n, \Rc, \Dc,r_c)$ where 
the source node $S$ delivers $n$ messages to all nodes in a destination set 
$\Dc \subseteq \{D_1,\ldots,D_T\}$ through multi-hop paths in a set $\Rc$ of 
intermediate nodes containing $r_c$ corrupted nodes.

We define the following performance metrics:
\begin{itemize}
\item
The message rate of a multicast session $\mathfrak{s}(S,n, \Rc, \Dc, r_c)$ is 
termed the \emph{multicast throughput}, and is denoted by $R_{S\Dc}$.
\item
The  rate of messages successfully delivered to each destination per 
session $\mathfrak{s}$ is termed \emph{throughput per destination} and is 
denoted by $R_{SD_i}$ for the destination $D_i$.
\item
The  rate of non-corrupted messages of  a multicast session 
$\mathfrak{s}(S,n, \Rc, \Dc, r_c)$ is termed the \emph{multicast goodput}. 
It is denoted by $G_{S\Dc}$ if our scheme is used, and by $G'_{S\Dc}$ 
otherwise.
\item
The rate of non-corrupted messages delivered to each destination per session 
$\mathfrak{s}$ is termed the \emph{goodput per destination} and is denoted by 
$G_{SD_i}$ for the destination $D_i$ if our scheme is used, and by $G'_{SD_i}$ 
otherwise.
\end{itemize}


\subsection{Multicast goodput analysis without the authentication scheme}

Pollution attacks degrade the multicast throughput $R_{S\Dc}$ of a session with a degradation
factor $\alpha \in [0,1]$, resulting in a multicast goodput of the
form: $$G'_{S\Dc}= \alpha R_{S\Dc}.$$

The multicast goodput of a session $\mathfrak{s}(S,n, \Rc, \Dc, r_c)$ depends on the topology
of the network and is expressed by the following expression:

\begin{equation}\label{eq:goodput0}
    G'_{S\Dc}= (1-\frac{n_{pc}}{n_{e_\Dc}})R_{S\Dc}
\end{equation}
where $n_{pc}$ is the number of paths corrupted by $r_c$, i.e., 
from the corrupted intermediate nodes $\Rc_c$ to the destinations in
$\Dc$; and $n_{e_\Dc}$ is the number of incoming edges in the
destination set $\Dc$. The multicast goodput varies depending on the
positions of the $r_c$ corrupted intermediate nodes in the
network.

The average multicast goodput of a session $\mathfrak{s}(S,n, \Rc,
\Dc, r_c)$ over all $j$ positions of the $r_c$ corrupted
intermediate nodes in the network is expressed by:
\begin{equation}\label{eq:goodput1}
    \tilde{G'}_{S\Dc}= \frac{\sum^\lambda_{j=1} G'_{S\Dc}(j)}{\lambda}
\end{equation}
where $\lambda$ is the combination of $r_c$ over $r$:
$\lambda=C^{r_c}_{r}= \frac{r!}{r_c!(r-r_c)!}.$

\subsection{Multicast goodput analysis with the authentication scheme}

With our authentication tags, if $\Rc \subset \Vc$, intermediate nodes in 
the network can then verify the integrity and origin of the messages received 
without having to decode.
They can detect and discard the corrupted messages in-transit that fail 
the verification.

The corrupted messages are discarded at their entrance in the network, and
therefore do not propagate in the network towards the destinations.
The multicast goodput is thus not degraded ($\alpha=1$), and equal to 
the multicast throughput:

\begin{equation}\label{eq:goodput3}
   G_{S\Dc}= R_{S\Dc}.
\end{equation}

The average multicast goodput gain offered by our scheme is expressed as
follows:

\begin{eqnarray}\label{eq:goodput4}
   \tilde{Gain}&=& \tilde{G_{S\Dc}} - \tilde{G'_{S\Dc}}\\
   &=&\tilde{R_{S\Dc}} - \tilde{G'_{S\Dc}}
\end{eqnarray}
where $\tilde{G'_{S\Dc}}$ is the average multicast goodput obtained without 
the use of our scheme.

Let us now present a few examples based on different topologies.

\begin{figure}
  \begin{center}
    \leavevmode
    \epsfxsize=10cm
   \epsfbox{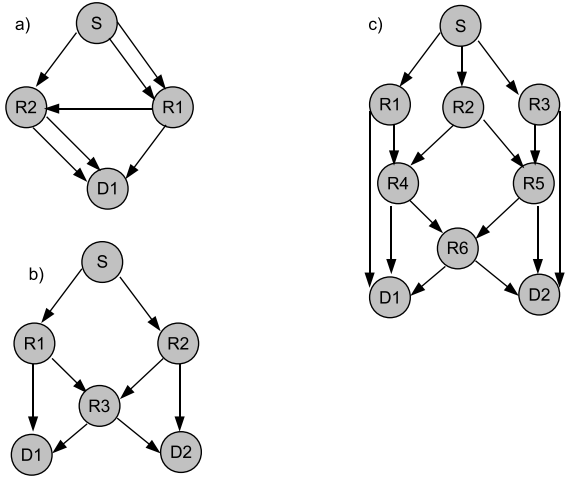}
     \caption{Examples of network topologies.}
    \label{pic:topo}
  \end{center}
\end{figure}

\noindent{\bf Topology a).} In Figure \ref{pic:topo}, we consider the 
topology a) with various configurations $\Rc_c$ (this is also the topology 
discussed in Example \ref{ex:net1}).
\begin{itemize}
\item
If $n=3$, $r_c=1$ and our scheme is not used, we obtain:
$$G'_{S\Dc}=\{\frac{1}{3}, \frac{2}{3}\} R_{S\Dc}.$$
\item
If $n=3$,  $r_c=2$ and our scheme is not used, then
$R_g=0$ and : $$G'_{S\Dc}= 0.$$
\end{itemize}

\begin{table}
\centering
\begin{tabular}{|c|c|c|c|c|c|}
  \hline
  $n$ &  $r_c$ & $min(G'_{S\Dc})$ & $max(G'_{S\Dc})$ & $\tilde{G'}_{S\Dc}$\\\hline
  3 &  1&$\frac{1}{3}R_{S\Dc}$&$\frac{2}{3}R_{S\Dc}$ &$\frac{1}{2}R_{S\Dc}$\\\hline
  3 & 2&0&0&0\\\hline
\end{tabular}
\normalsize
\caption{Multicast Goodput results for Topology a) }
\label{tab:table0}
\end{table}

\noindent{\bf Topology b).}
In Figure \ref{pic:topo}, we consider the topology b) with again various
configurations of $\Rc_c$.

If $n=2$,  $r_c=1$ and our scheme is not used, we have two possibilities 
for the intermediate receiver that holds corrupted packets:
\begin{itemize}
\item
If the intermediate receiver with corrupted messages is on the first
hop from the source (i.e., $R_1$, $R_2$), then
$$G'_{SD_i}=\{0,\frac{1}{2}\} \times R_{SD_i},$$
$$G'_{S\Dc}= \frac{1}{4}R_{S\Dc}.$$
\item
If the intermediate receiver with corrupted messages is on the second hop
from the source (i.e., $R_3$), then
$$G'_{SD_i}=\frac{1}{2}R_{SD_i},$$ 
$$G'_{S\Dc}= \frac{2}{4}R_{S\Dc}=\frac{1}{2}R_{S\Dc}.$$
\end{itemize}

If $r_c=2$ and our scheme is not used, then we have $r_g=1$, and there are 
two possibilities again:
\begin{itemize}
\item 
If the intermediate receivers with corrupted messages are on the first
hop from the source (i.e., $R_1$, $R_2$), then
$$G'_{SD_i}=0;G'_{S\Dc}= 0.$$
\item 
If one intermediate receiver with corrupted messages is on the second
hop from the source (i.e. $R_3$) and the other is on the first hop from the
source (i.e., $R_1$, $R_2$), then
$$G'_{SD_i}=\{0,\frac{1}{2}\} \times R_{SD_i},$$ 
$$G'_{S\Dc}= \frac{1}{4}R_{S\Dc}.$$
\end{itemize}

If $r_c=3$ and our scheme is not used, we have $r_g=0$
and $G'_{S\Dc}=0$. The multicast goodput results are summarized in
Table \ref{tab:table1}:
\\
\begin{table}
  \centering
\begin{tabular}{|c|c|c|c|c|c|}
  \hline
  $n$ &  $r_c$ & $min(G'_{S\Dc})$ & $max(G'_{S\Dc})$& $\tilde{G'}_{S\Dc}$\\\hline
  2 &  1&$\frac{1}{4}R_{S\Dc}$&$\frac{1}{2}R_{S\Dc}$&$\frac{1}{3}R_{S\Dc}$\\\hline
  2 & 2&0&$\frac{1}{4}R_{S\Dc}$&$\frac{1}{6}R_{S\Dc}$\\\hline
  2 & 3&0&0&0\\\hline
\end{tabular}
  \normalsize
  \caption{Multicast Goodput results for Topology b) }\label{tab:table1}
\end{table}

\noindent{\bf Topology c).} In the topology c), we consider also various 
configurations of  $\Rc_c$. The multicast goodput results 
are summarized in Table \ref{tab:table2}.

\begin{table}
  \centering
  \normalsize
\begin{tabular}{|c|c|c|c|c|c|}
  \hline
  $n$ & $r_c$ & $min(G'_{S\Dc})$ & $max(G'_{S\Dc})$& $\tilde{G'}_{S\Dc}$\\\hline
  3 & 1&$\frac{2}{6}R_{S\Dc}$&$\frac{4}{6}R_{S\Dc}$&$\frac{4}{9}R_{S\Dc}$\\\hline
  3 & 2&0&$\frac{3}{6}R_{S\Dc}$&$\frac{4}{15}R_{S\Dc}$\\\hline
  3 & 3&0&$\frac{2}{6}R_{S\Dc}$&$\frac{11}{60}R_{S\Dc}$\\\hline
  3 & 4&0&$\frac{2}{6}R_{S\Dc}$&$\frac{1}{9}R_{S\Dc}$\\\hline
  3 & 5&0& $\frac{1}{6}R_{S\Dc}$&$\frac{1}{18}R_{S\Dc}$\\\hline
  3 & 6&0&0&0\\
  \hline
\end{tabular}
    \caption{Multicast Goodput results for Topology c) }\label{tab:table2}
\end{table}

If we now consider the goodput gains with our scheme for topologies 
a), b), c), we get that for all $r_c$,  $G_{S\Dc}= R_{S\Dc}$. In the three
topologies, our scheme offers multicast goodput gains that are given in
Table \ref{tab:table4}. As the number of corrupted messages injected increases in the network, the
average multicast goodput gain naturally tends towards $1$.
\begin{table}
  \centering
\begin{tabular}{|c|c|c|c|c|}
  \hline
   $r_c$& Topology a) & Topology b) & Topology c)\\\hline
  $1$ & 0,5&0,66&0,55\\\hline
  $2$&1 & $0,83$&0,73\\\hline
  $3$&- & 1&0,81\\\hline
  $4$&-&- &0,88\\\hline
  $5$&-&- &0,94\\\hline
  $6$&-&- &1\\\hline
\end{tabular}
  \normalsize
  \caption{Average Goodput Gains obtained with our scheme}\label{tab:table4}
\end{table}

\section{Application to File distribution}
\label{sec:applic}

In this section, we present how the proposed $(k,V,M)$ authentication scheme 
could be easily applied to content or file distribution.
For content distribution over an IP-based network with our scheme, at
most $M$ messages forming the file to be distributed can be transmitted by
the source through the network in an authenticated way using the same key.
For our scheme to be secure against a coalition of $k-1$ receivers, we recall 
the following rules:
\begin{itemize}
\item 
$M\geq n$, where $n$ is the number of messages to be sent by the source.
\item 
$M\geq H$, where $H$ is the maximum number of incoming edges in a coalition 
of malicious nodes.
\end{itemize}

We also define $N$ as the size of the generation of IP packets carrying one 
message authenticated by one tag. Figure \ref{pic:exple} illustrates the 
relation between an IP packet and a message.

\begin{figure}
  \begin{center}
    \leavevmode
    \epsfxsize=10cm
   \epsfbox{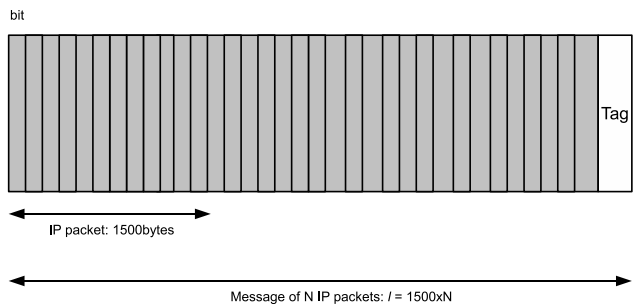}
     \caption{Structure of a message}
    \label{pic:exple}
  \end{center}
\end{figure}

In a practical scenario, the following should be considered:
\begin{itemize}
\item
a message consists of $l$ symbols $s_{ij}$ with a symbol being bit.
\item
one message authenticated by one tag consists of $N$ IP packets (also called
a generation).
\item
IP packets are 1500 bytes long (12000 bits) with a payload of 1480 bytes.
\item
The message length $l$ can be expressed in bits and in bytes. We refer to
$l_{bits}$ to the message length expressed in bits and to $l_{bytes}$ to the
message length expressed in bytes. $$l_{bytes}= 1500 \times N $$ $$l_{bits}=8
\times l_{bytes} = 12000\times N.$$
\end{itemize}

For $M\leq l_{bits}$, we have $M \leq 12000\times N$, which means that the
source can use the same key to tag at most $M=12000\times N$ messages of
length $12000N$ bits (carried over $N$ IP packets that are 12000 bits long).

Destinations can download a file with at most the following size in bytes
(including headers): $M\times l_{bytes}= l_{bits}\times l_{bytes}=
8\times l_{bytes}^2=8\times (1500N)^{2} = 18 \times 10^6 \times N^2$ bytes.
The destinations can  use the same key to authenticate a file download of at
most $18N^2$ MBytes when one tagged message is carried over $N$ IP packets.

A receiver node can have at most $12000\times N$ incoming edges and the source $S$ can send $n \leq 12000 \times N $ messages.

The scenarios in Table \ref{tab:table5} show what should be the size of an IP packet
generation to allow the  distribution of a given file to be authenticated
under the same key.

For distributing a file that is 18MBytes, it is sufficient for the source to
send one tagged message in one IP packet of 1500bytes . The source sends then
12 000 messages tagged that form the 18MBytes file. Any destination can verify
with the same key each tag attached to the 12 000 messages.

For distributing a file that is 1.8GBytes, the source generates tagged
messages of size 15KBytes.  Each message is sent in a generation of 10 IP
packets. The source sends 120K messages tagged that form the file. At the
destination, the same key can be used to verify the tags of the 120K messages
received.
\begin{table}
  \centering
\begin{center}
\begin{tabular}{|c|c|c|c|}
\hline
File size & Generation Size& Message length & Nb of messages authenticated  \\
(bytes)& N &l (bytes)& by the same key M \\\hline
18M& 1 & 1500 & 12 000\\\hline
72M&2&3000& 24 000\\\hline
1.8G& 10 & 15K & 120 000 \\\hline
4.05G&15&22.5K&180 000\\\hline
\end{tabular}
\end{center}
  \normalsize
  \caption{Parameters of our scheme for distribution of files of variable sizes}
  \label{tab:table5}
\end{table}

%
%

\section{Conclusion}
In this paper, we have proposed an unconditionally secure authentication 
scheme that provides multicast linear network coding with message integrity 
protection and source authentication. The resulting scheme offers robustness 
against pollution attacks from outsiders and from $k-1$ insiders. Our solution 
allows the source to generate authentication tags for up to $M$ messages with 
the same key and the intermediate nodes to verify the authentication tags of 
the packets received and thus to detect and discard the malicious packets that
fail the verification. The performance analysis showed that our scheme offers 
goodput gains that tend towards $1$ with increasing corrupted packets in the 
network. Our scheme can be used to authenticate with the same key a file 
download of at most $18N^2$ MBytes when one tagged message is carried over $N$ 
IP packets.

Future work will involve optimization of the parameters involved in the
authentication scheme for a more efficient solution.
Another aspect to consider in the future is to offer more flexibility over 
the sender as the scheme proposed here requires the sender to be designated.

%
%

\section*{Acknowledgment}

The work of Fr\'ed\'erique Oggier is supported in part by the Singapore National
Research Foundation under Research Grant NRF-RF2009-07 and 
NRF-CRP2-2007-03, and in part by the Nanyang Technological University under 
Research Grant M58110049 and M58110070.

Hanane Fathi would like to acknowledge the support of the National Institute
of Advanced Industrial Science and Technology, Japan.

Most of the ideas of this work were discussed while both authors were
visiting the AIST Research Center for Information Security, Tokyo, Japan.

%
%

\end{document}